\newtheorem{thm}{Theorem}
\newtheorem{assumption}{Assumption}
\newtheorem{definition}{Definition}
\newtheorem{lemma}{Lemma}
\newtheorem{corollary}{Corollary}
\begin{document}
\title{\huge{Uncovering the Iceberg in the Sea: Fundamentals of Pulse Shaping and Modulation Design for Random ISAC Signals}
}
\author{
	{
	Fan Liu,~\IEEEmembership{Senior Member,~IEEE}, Yifeng Xiong,~\IEEEmembership{Member,~IEEE}, Shihang Lu,~\IEEEmembership{Student Member,~IEEE}, \\Shuangyang Li,~\IEEEmembership{Member,~IEEE}, Weijie Yuan,~\IEEEmembership{Senior Member,~IEEE}, Christos Masouros,~\IEEEmembership{Fellow,~IEEE}, \\Shi Jin,~\IEEEmembership{Fellow,~IEEE}, and~Giuseppe Caire,~\IEEEmembership{Fellow,~IEEE}
	} 
\thanks{(\textit{Corresponding author: Yifeng Xiong})}
\thanks{F. Liu and S. Jin are with the National Mobile Communications Research Laboratory, Southeast University, Nanjing 210096, China (e-mail: f.liu@ieee.org, jinshi@seu.edu.cn).}
\thanks{S. Lu and W. Yuan is with the School of System Design and Intelligent Manufacturing, Southern University of Science and Technology, Shenzhen 518055, China. (e-mail: lush2021@mail.sustech.edu.cn, yuanwj@sustech.edu.cn).}
\thanks{Y. Xiong is with the School of Information and Electronic Engineering, Beijing University of Posts and Telecommunications, Beijing 100876, China. (e-mail: yifengxiong@bupt.edu.cn)}
\thanks{S. Li and G. Caire are with the Chair of Communications and Information Theory, Technical University of Berlin, 10623 Berlin, Germany (e-mail: shuangyang.li@tu-berlin.de, caire@tu-berlin.de).}
\thanks{C. Masouros is with the Department of Electrical and Electronic Engineering, University College London, London WC1E 7JE, UK (email: c.masouros@ucl.ac.uk).}

}
\maketitle

\begin{abstract}
Integrated Sensing and Communications (ISAC) is expected to play a pivotal role in future 6G networks. To maximize time-frequency resource utilization, 6G ISAC systems must exploit data payload signals, that are inherently random, for both communication and sensing tasks. This paper provides a comprehensive analysis of the sensing performance of such communication-centric ISAC signals, with a focus on modulation and pulse shaping design to reshape the statistical properties of their auto-correlation functions (ACFs), thereby improving the target ranging performance. We derive a closed-form expression for the expectation of the squared ACF of random ISAC signals, considering arbitrary modulation bases and constellation mappings within the Nyquist pulse shaping framework. The structure is metaphorically described as an ``iceberg hidden in the sea", where the ``iceberg'' represents the squared mean of the ACF of random ISAC signals, that is determined by the pulse shaping filter, and the ``sea level'' characterizes the corresponding variance, caused by the randomness of the data payload. Our analysis shows that, for QAM/PSK constellations with Nyquist pulse shaping, Orthogonal Frequency Division Multiplexing (OFDM) achieves the lowest ranging sidelobe level across all lags. Building on these insights, we propose a novel Nyquist pulse shaping design to enhance the sensing performance of random ISAC signals. Numerical results validate our theoretical findings, showing that the proposed pulse shaping significantly reduces ranging sidelobes compared to conventional root-raised cosine (RRC) pulse shaping, thereby improving the ranging performance.
\end{abstract}
\begin{IEEEkeywords}
ISAC, OFDM, auto-correlation function, pulse shaping, ranging sidelobe.
\end{IEEEkeywords}

\section{Introduction}
\IEEEPARstart{T}{he} 6G wireless networks represent a groundbreaking paradigm shift, designed to support and enable a wide array of advanced technologies, including autonomous driving, smart manufacturing, digital twins, and applications within the emerging low-altitude economy \cite{Chafii2023CST,saad2019vision}. In contrast to previous generations that primarily focused on enhancing communication capabilities, 6G is expected to broaden its functional scope by incorporating novel technologies such as ISAC \cite{9737357}. This convergence of sensing and communication within a unified framework is poised to fundamentally reshape how wireless networks operate, enhancing both spectral and hardware efficiencies while creating a multi-functional network offering diverse services to a vast number of users. Notably, the International Telecommunication Union (ITU) recently acknowledged ISAC as one of the six key usage scenarios in its global 6G vision \cite{ITU2023}.

ISAC aims to facilitate the shared utilization of wireless resources—such as time, frequency, and power—for both sensing and communication functions through a unified hardware system \cite{9585321}. A significant challenge in this domain is the design of dual-functional signals capable of simultaneously handling target detection and data transmission within the ISAC channel \cite{10334037,10292797}. Current design paradigms can be classified into three primary approaches: sensing-centric, communication-centric, and joint designs \cite{9737357}. The sensing-centric approach integrates communication capabilities into classic radar signals, such as chirp waveforms. In contrast, the communication-centric method modifies existing communication protocols and signals to incorporate sensing functionalities. The joint design philosophy, however, advocates for the creation of entirely new waveforms that flexibly balance and integrate both sensing and communication objectives \cite{10018908}.

Each of the aforementioned designs addresses specific application needs, while the communication-centric approach is currently in focus for commercial 6G deployment due to its alignment with the standards and back compatibility to legacy systems \cite{10012421,9921271}. By employing the data payload of the transmitted frame directly for sensing tasks, this methodology eliminates the need for complex waveform re-design or adaptation. This simplification not only reduces system complexity but also enhances resource efficiency, positioning it as a compelling option for implementing ISAC in 6G networks. In contrast to conventional radar waveforms, which are endowed with deterministic structures, information entropy is at the core of communication signaling, necessitating an inherent degree of \textit{randomness} to effectively convey information. This intrinsic randomness, however, may compromise the sensing performance, imposing a challenge characterized as the ``Deterministic-Random Tradeoff (DRT)'' \cite{10147248}. Therefore, it is imperative to explore the achievable sensing performance of standard communication signals, and, more essentially, to clarify how this performance is shaped by the basic building blocks of communication systems.

A practical communication signal, at its most fundamental level, can be decomposed into three primary components: 1) \textit{Constellation symbols} mapped from information bits; 2) \textit{An orthonormal modulation basis} that conveys these symbols; and 3) \textit{A pulse shaping filter} that converts the time-domain discrete samples into a continuous form. All of them may have a significant impact on the resulting sensing performance. For instance, it is well-recognized in the radar community that signals with constant modulus enhance the sensing capabilities. This has been confirmed by recent studies on OFDM-based ISAC systems, which demonstrated that PSK constellations yield considerably lower ranging sidelobe levels compared to their QAM counterparts \cite{zhang2023input,Keskin2024fundamental}. Additionally, research has shown that the variance of the ambiguity function (AF) for OFDM signals is critically dependent on the fourth moment of the constellation, known as the \textit{kurtosis} \cite{du2024reshaping}. Specifically, OFDM signals modulated with high-kurtosis constellations exhibit significant variability in sidelobe levels of the AF, resulting in missed detection of weak targets and false detection of ghost targets, thus impairing the overall sensing performance. Therefore, it is important to control the kurtosis of constellations while maximizing their communication rates, which may be addressed through dedicated probabilistic constellation shaping (PCS) techniques \cite{du2024reshaping}.

In addition to constellation design, the choice of the modulation basis—often referred to as a ``waveform'' in a broader context—also significantly influences the ISAC transmission performance. The foundational study by Sturm and Wiesbeck \cite{sturm2011waveform} examined the effectiveness of OFDM signals in accurately measuring the delay and Doppler characteristics of radar targets. Since then, extensive research has focused on evaluating the sensing performance of various communication modulation schemes, including OFDM \cite{sturm2011waveform,zhang2023input,9529026, Keskin2024fundamental}, single-carrier (SC) \cite{9005192}, code division multiple access (CDMA) \cite{9724170}, orthogonal time-frequency space (OTFS) \cite{9109735,10463758}, and their combinations \cite{9359665,10264814}. Recent discussions have particularly focused on comparing the sensing capabilities of OTFS and OFDM. The analysis in \cite{9109735} revealed that OFDM demonstrates a slight performance gain over OTFS in terms of range and velocity estimation accuracy. In contrast, the findings in \cite{10638525} suggested that OTFS produces lower sidelobes in the AF for both delay and Doppler domains; however, this comparison was somewhat biased due to the use of random QPSK symbols for OFDM, while OTFS employed deterministic symbols. More recently, a novel modulation technique known as affine frequency division multiplexing (AFDM) has been introduced for high-mobility communications \cite{10087310,10439996}. This method strategically places symbols within the affine Fourier transform (AFT) domains, using orthogonal chirp signals as the signaling basis. AFDM is ainticipated to enhance the communication perofrmance in high-mobility scenarios due to its flexibility in designing the basis functions across the time-frequency-delay-Doppler domains, as well as the sensing performance due to the advantageous radar properties of chirp signals. 

More relevant to this study, a closed-form expression for the expectation of the squared auto-correlation function (ACF) for random ISAC signals (also referred to as the zero-Doppler cut of the AF) was derived in \cite{liu2024ofdm} under arbitrary modulation bases and rotationally symmetric constellations. In \cite{liu2024ofdm}, it was rigorously demonstrated that for QAM and PSK constellations, OFDM stands out as the only globally optimal modulation scheme, achieving the lowest average ranging sidelobe level among all communication signals with cyclic prefix (CP), while also being locally optimal among signals without CP. Although these findings help clarify the ongoing discourse regarding the most effective communication modulation scheme for realizing the sensing functionality, the framework presented in \cite{liu2024ofdm} is built upon Nyquist sampling and overlooks the impact of pulse shaping filters, leading to an incomplete picture of the overall sensing performance. Nyquist sampling, although sufficient for communication, is inadequate for analyzing the sensing performance of continuous-time ISAC signals, particularly when assessing ranging performance at fractional delays relative to symbol duration \cite{10118873}. In this context, the pulse shaping filter is crucial for determining the ACF and AF of random ISAC signals. The recent work in \cite{liao2024pulse} investigated the sensing performance of pulse-shaped SC signals, and introduced a randomness-aware pulse shaping design aimed at minimizing the average sidelobe level of the AF. However, this methodology is confined to SC signaling, and encounters considerable challenges when applying to arbitrary modulation bases. Moreover, the optimal modulation basis for sensing in the context of pulse shaping remains unclear, highlighting a critical gap in evaluating the performance of continuous-time communication-centric ISAC signals.

This paper offers an in-depth and systematic examination of the ranging performance of pulse-shaped random ISAC signals. We build upon and substantially extend the analytical framework introduced in previous research \cite{liu2024ofdm,liao2024pulse}, to apply to a broader class of signals that employ arbitrary orthonormal modulation bases and Nyquist pulse shaping filters, while considering the transmission of independent and identically distributed (i.i.d.) symbol sequences realized from any valid constellation. Specifically, we employ the periodic ACF of random ISAC signals as a performance indicator for ranging, which corresponds to the matched-filtering (MF) operation, and analyze its statistical properties. For clarity, our contributions are summarized as follows:
\begin{itemize}
    \item We derive a closed-form expression for the expectation of the squared ACF of random ISAC signals, revealing its structure within the framework of Nyquist pulse shaping. This structure can be metaphorically represented as an ``iceberg in the sea'', where the ``iceberg'' signifies the squared mean, critical in shaping the MF performance in ranging, and the ``sea level'' represents the variance of the ACF, whose increasing values would deteriorate the multi-target estimation accuracy. We demonstrate that the ``iceberg'' corresponds to the squared ACF of the chosen pulse, while the ``sea level'', determined by the variability of the random data, can be diminished by a factor of $M$ via coherently integrating the MF output by $M$ times.
    \item We prove that under QAM/PSK constellations, OFDM is the only modulation basis that achieves the lowest ranging sidelobe level at every lag for all Nyquist pulse shaping filters. This finding is a significant extension compared to previous works \cite{liu2024ofdm}, which solely focused on discrete time-domain samples and overlooked the effects of pulse shaping, and allows us to provide a more comprehensive understanding of communication-centric ISAC transmission.
    \item Building on the above understanding, we propose a new pulse shaping design aimed at enhancing the sensing performance of random ISAC signals. In particular, the coherent integration operation reduces only the ``sea level'' part. After integrating a sufficiently large number of MF outputs, the sensing performance will depend mainly on the geometry of the ``iceberg''. Based on this insight, we propose an optimization approach for designing Nyquist pulses to minimize the sidelobe level of the ACF within a specified delay region, referred to as the ``iceberg shaping'' technique.
\end{itemize}

The remainder of this paper is organized as follows. Sec. \ref{sec_2} introduces the system model of the considered ISAC system and the corresponding performance metrics. Sec. \ref{ACF_Sec} characterizes the ACF of pulse-shaped random ISAC signals. Sec. \ref{iceberg_inspired_design} discusses the design guidelines for ISAC transmission inspired by the iceberg metaphor. Sec. \ref{sec_5} provides simulation results to validate the theoretical analysis of the paper. Finally, Sec. \ref{sec_6} concludes the paper.
\\\indent {\emph{Notations}}: Matrices are denoted by bold uppercase letters (e.g., $\mathbf{U}$), vectors are represented by bold lowercase letters (e.g., $\mathbf{x}$), and scalars are denoted by normal font (e.g., $N$); The $n$th entry of a vector $\mathbf{s}$, and the $(m,n)$-th entry of a matrix $\mathbf{A}$ are denoted as $s_n$ and $a_{m.n}$, respectively; $\otimes$, $\odot$ and $\operatorname{vec}\left(\cdot\right)$ denote the Kronecker product, Hadamard product, and vectorization; $\left(\cdot\right)^T$, $\left(\cdot\right)^H$, and $\left(\cdot\right)^\ast$ stand for transpose, Hermitian transpose, and complex conjugate of the matrices; $\operatorname{Re}\left(\cdot\right)$ and $\operatorname{Im}\left(\cdot\right)$ denote the real and imaginary parts of the argument; $\ell_p$ norm and the Frobenius norm are written as $\left\| \cdot\right\|_p$ and $\left\| \cdot\right\|_F$; $\mathbb{E}(\cdot)$ and $\operatorname{var}(\cdot)$ represent the expectation and variance of the random variable, respectively. 

\section{System Model}\label{sec_2}

Let us examine a single-antenna monostatic ISAC system. The ISAC transmitter (Tx) sends out a signal that is encoded with random communication symbols. This transmitted ISAC signal is captured by a communication receiver (Rx) and is also echoed back by various targets at differing distances to the sensing Rx. Positioned together with the ISAC Tx, the sensing receiver employs matched-filtering (MF) to measure the delay parameters of these targets, relying on the predetermined random ISAC signal.
\subsection{Constellation Model}
Define the vector $\mathbf{s} = \left[s_1, s_2,\ldots,s_{N}\right]^T\in \mathbb{C}^{N\times 1}$ as $N$ transmitted communication symbols that are independently and identically (i.i.d) drawn from a complex constellation $\mathcal{S}$. For ease of analysis, several fundamental assumptions about these constellations are made below, applicable universally.
\begin{assumption}[Unit Power]
    We focus on normalized constellations with a unit power, namely, 
    \begin{equation}
        \mathbb{E}(|s|^2) = 1,\quad \forall s\in\mathcal{S}.
    \end{equation}
\end{assumption}
\noindent This encapsulates PSK but also QAM and other constellations with appropriate scaling.
\begin{assumption}[Rotational Symmetry]
    The expectation and pseudo variance of the constellation are zero, namely
    \begin{equation}
        \mathbb{E}(s) = 0,\quad\mathbb{E}(s^2) = 0, \quad \forall s\in\mathcal{S}.
    \end{equation}
\end{assumption}
\noindent We remark that most of the commonly employed constellations meet the criterion in Assumption 2, including all the PSK and QAM constellations except for BPSK and 8-QAM.

Let us further define
\begin{equation}
    \mu_4 \triangleq \frac{\mathbb{E}\left\{|s-\mathbb{E}(s)|^4\right\}}{\mathbb{E}\left\{|s-\mathbb{E}(s)|^2\right\}^2} =  \mathbb{E}(|s|^4),\quad \forall s\in\mathcal{S}.
\end{equation}
which is known as the \textit{kurtosis} of the constellation. 

The standard complex Gaussian distribution adheres to the criteria outlined above. Known for maximizing channel capacity in Gaussian channels, signals drawn from a Gaussian distribution, with a kurtosis of 2, also serves as a key benchmark for assessing the sensing capabilities of the ISAC signal. Consequently, we introduce and classify two types of constellations below.
\begin{definition}[Sub-Gaussian Constellation]
    A sub-Gaussian constellation is a constellation with kurtosis less than 2, that adheres to Assumptions 1 and 2.
\end{definition}
\begin{definition}[Super-Gaussian Constellation]
    A super-Gaussian constellation is a constellation with kurtosis greater than 2, that adheres to Assumptions 1 and 2.
\end{definition}

QAM and PSK constellations fall under the sub-Gaussian category. Notably, PSK constellations exhibit a kurtosis of 1, whereas QAM constellations feature kurtosis values ranging from 1 to 2. The kurtosis values for standard QAM and PSK constellations are displayed in TABLE. \ref{tab: kurtosis}. On the other hand, super-Gaussian constellations, which show considerably variable amplitudes, can be developed using either geometric or probabilistic methods for constellation shaping. These constellations are especially beneficial in contexts where energy efficiency is paramount or where non-coherent communication methods are employed \cite{923716,1532206,1532207}.

\begin{table}[!t]
\caption{Kurtosis values of typical sub-Gaussian constellations}
\label{tab: kurtosis}
\begin{tabular}{l|c|c|c|c}
\hline
\textbf{Constellation} & PSK     & 16-QAM  & 64-QAM   & 128-QAM  \\ \hline
\textbf{Kurtosis}      & 1       & 1.32    & 1.381   & 1.3427   \\ \hline
\textbf{Constellation} & 256-QAM & 512-QAM & 1024-QAM & 2048-QAM \\ \hline
\textbf{Kurtosis}      & 1.3953  & 1.3506  & 1.3988   & 1.3525   \\ \hline
\end{tabular}
\end{table}

\subsection{Modulation Basis}
In standard communication systems, we modulate the symbol vector $\mathbf{s}$ over an orthonormal basis on the time domain, which is sometimes referred to as a ``waveform". This can be characterized as a unitary matrix $\mathbf{U} = \left[\mathbf{u}_1,\mathbf{u}_2,\ldots,\mathbf{u}_{N}\right]\in \mathbb{U}\left(N\right)$, where $\mathbb{U}\left(N\right):\left\{\mathbf{U}\in\mathbb{C}^{N \times N}| \mathbf{U}\mathbf{U}^H = \mathbf{U}^H\mathbf{U} = \mathbf{I}_N\right\}$ stands for the $N$-dimensional unitary group.
The discrete time-domain signal sample $\mathbf{x} = \left[x_1, x_2,\ldots,x_{N}\right]^T\in \mathbb{C}^{N\times 1}$ can then be expressed as
\begin{equation}
    \mathbf{x} = \mathbf{U}\mathbf{s} = \sum\limits_{n = 1}^{N} {s_n\mathbf{u}_n}.
\end{equation}
The above generic model may represent most of the orthogonal communication signaling schemes, such as SC, OFDM, CDMA, OTFS, and AFDM. We refer the readers to \cite{liu2024ofdm} for more detailed examples. In these schemes, the addition of a cyclic prefix (CP) is often necessary, which eliminates the inter-symbol interference (ISI) caused by multi-path effect, and reduces the computational complexity by processing the received signal in the frequency/delay-Doppler/code/affine domains. Towards that end, we consider ISAC signaling with CP, which correspond to the periodic convolution processing of the MF at the sensing Rx. Without loss of generality, the CP is assumed to be larger than the maximum delay of the communication paths and sensing targets.

\subsection{Pulse Shaping Model}
In order to transmit the signal $\mathbf{x}$ in the band-limited ISAC channel, pulse shaping is indispensable as a means to restrict the bandwidth of the signal and to eliminate the ISI among the time-domain samples. Let $p(t)$ be a band-limited prototype Nyquist pulse with an one-sided bandwidth $B$ and a roll-off factor $\alpha$. The continuous pulse-shaped signal may then be expressed as
\begin{equation}
    \tilde{x}(t) = \sum\limits_{n = 1}^{N}x_n p(t-nT),
\end{equation}
where $T = \frac{1+\alpha}{2B}$ is the symbol duration. 
Upon relying on the unit impulse function $\delta(t)$, and by adding CP to $\mathbf{x}$, $\tilde{x}(t)$ can be alternatively expressed as the following circular convolution:
\begin{equation}
    \tilde{x}(t) = \sum\limits_{n = 1}^{N}x_n \delta(t-nT)\circledast p(t).
\end{equation}


We proceed our study using an oversampling based implementation with a sampling rate $f_s = \frac{1}{T_s}$, with $T_s$ being the sampling duration. 
Without loss of generality, we assume that the over-sampling ratio $L = \frac{T}{T_s} \ge 1$ is an integer, where $L = 1$ corresponds to the Nyquist sampling. By doing so, the $k$th sample of the pulse-shaped signal $x(t)$ is given by
\begin{align}\label{pulse_shaped_signal}
   \tilde{x}_{k} = \sum\limits_{n = 1}^{N}x_n \delta(kT_s-nT)\circledast p(kT_s),\;k = 0,1,\ldots, LN-1, 
\end{align}
Let $p_k \triangleq p(kT_s)$, and denote $\mathbf{p} = \left[p_0,p_1,\ldots,p_{LN-1}\right]^T$, with its energy being normalized to $\left\|\mathbf{p}\right\|^2 = 1$. By defining the up-sampled signal as
\begin{align}
\mathbf{x}_{\rm up} = \left[x_1,\mathbf{0}_{L-1}^T, x_2,\mathbf{0}_{L-1}^T,\ldots,x_N,\mathbf{0}_{L-1}^T\right]^T,
\end{align}
we may recast \eqref{pulse_shaped_signal} into the following compact form:
\begin{equation}
\tilde{\mathbf{x}} = \mathbf{P}\mathbf{x}_{\rm up},
\end{equation}
where $\mathbf{P}\in \mathbb{C}^{LN \times LN}$ is a circulant matrix, given by
\begin{equation}
{\mathbf{P}} = \left[ {\begin{array}{*{20}{c}}
  {{p_0}}&{{p_{LN-1}}}& \ldots &{{p_1}} \\ 
  {{p_1}}&{{p_0}}& \ldots &{{p_2}} \\ 
   \vdots & \vdots & \ddots & \vdots  \\ 
  {{p_{LN-1}}}&{{p_{LN-2}}}& \ldots &{{p_0}} 
\end{array}} \right].
\end{equation}
In practice, the resultant sequence $\tilde{\mathbf{x}}$ will be passed to a digital-to-analog converter (DAC) for approximating the continuous waveform $\tilde{x}(t)$ with a predefined chip pulse.
For the ease of study, we henceforth discuss the ranging problem in a discrete form by considering the sampled sequence $\tilde{\mathbf{x}}$. This yields a practical and good approximation of the actual problem in hand, especially when the over-sampling ratio is high.

\subsection{Sensing Signal Processing based on Matched Filtering}
Let us consider the scenario where $Q$ targets located at different ranges need to be sensed simultaneously. Without the loss of generality, suppose that the sampling ratio $L$ is sufficiently large such that all targets are on grid. The received echo signal after sampling is given by
\begin{equation}
    \mathbf{y} = \sum\limits_{q = 1}^Q\alpha_q\mathbf{J}_{\tau_q}\tilde{\mathbf{x}} + \mathbf{z},
\end{equation}
where $\mathbf{z}\sim\mathcal{CN}\left({\mathbf{0},\sigma_z^2\mathbf{I}}\right)$ stands for the white Gaussian noise, $\alpha_q$ and $\tau_q$ are the complex reflection coefficient and delay (normalized by $T_s$) of the $q$th target, respectively, and $\mathbf{J}_k\in\mathbb{R}^{LN\times LN}$ represents the $k$th periodic time-shift matrix due to the addition of the CP, which is
\begin{equation}
    \mathbf{J}_k = \left[ {\begin{array}{*{20}{c}}
  {\mathbf{0}}&{{{\mathbf{I}}_{LN - k}}} \\ 
  {\mathbf{I}_k}&{\mathbf{0}} 
\end{array}} \right],
\end{equation}
and
\begin{equation}
    \mathbf{J}_{-k} = \mathbf{J}_{LN-k} = \mathbf{J}_k^T= \left[ {\begin{array}{*{20}{c}}
  {\mathbf{0}}&{{{\mathbf{I}}_{k}}} \\ 
  {\mathbf{I}_{LN - k}}&{\mathbf{0}} 
\end{array}} \right].
\end{equation}
Accordingly, the auto-correlation function (ACF) of $\tilde{\mathbf{x}}$ can be defined as
\begin{equation}
    R_k = \tilde{\mathbf{x}}^H{{\mathbf{J}}_{{k}}}\tilde{\mathbf{x}},\;\; k = 0,1,\ldots,LN-1.
\end{equation}
The ACF of the ISAC signal is an important performance indicator for ranging tasks. In particular, the mainlobe width of the ACF determines the range resolution, which is usually inversely proportional to the signal bandwidth. The sidelobe level, on the other hand, is critical for multi-target detection.

To extract the delay parameters, a common practice is to matched filter (MF) the echo signal $\mathbf{y}$ with the transmitted signal $\tilde{\mathbf{x}}$, yielding the following MF output:
\begin{align}
    {{\tilde y}_i} &\nonumber = \tilde{\mathbf{x}}^H{{\mathbf{J}}_{i}^T}{\mathbf{y}} = \sum\limits_{q = 1}^Q {{\alpha _q}} \tilde{\mathbf{x}}^H{{\mathbf{J}}_{ i}^T}{{\mathbf{J}}_{{\tau _q}}}\tilde{\mathbf{x}} + \tilde{\mathbf{x}}^H{{\mathbf{J}}_{i}^T}{\mathbf{z}} \\
    &\nonumber = \sum\limits_{q = 1}^Q {{\alpha _q}} \tilde{\mathbf{x}}^H{{\mathbf{J}}_{{\tau _q-i}}}\tilde{\mathbf{x}} + \tilde{\mathbf{x}}^H{{\mathbf{J}}_{i}^T}{\mathbf{z}},\\
    & = \sum\limits_{q = 1}^Q {{\alpha _q}} R_{\tau_q-i} + \tilde{z}_i,\;\; i = 0,1,\ldots,LN-1,
\end{align}
which may be viewed as a linear combination of time-shifted ACFs plus noise. In order to facilitate the detection of targets, it is desired that the squared output $|{{\tilde y}_i}|^2$ generates high peaks at $i = \tau_q$, and small sidelobe levels elsewhere, which depends heavily on the overall geometry of the ACF. Note that the ACF is a random function due to the randomness of communication symbols, in which case we have to evaluate its statistical properties rather than a specific realization. In particular, we are interested in the expectation of the squared ACF, namely,
\begin{equation}\label{squared_ACF}
    \mathbb{E}(|R_k|^2) = \mathbb{E}(|\tilde{\mathbf{x}}^H{{\mathbf{J}}_{{k}}}\tilde{\mathbf{x}}|^2), \;\; k = 0,1,\ldots,LN-1,
\end{equation}
where the expectation is over $\mathbf{s}$. Here the square is imposed to measure the average mainlobe and sidelobe levels of the ACF, as $R_k$ is a complex function.

In the next section, we aim to derive a closed-form expression of \eqref{squared_ACF} under arbitrary orthogonal waveform $\mathbf{U}$ and Nyquist pulse $\mathbf{p}$, for i.i.d. symbol sequences drawn from any proper constellation $\mathcal{S}$. We show that the structure of \eqref{squared_ACF} may be understood as an ``iceberg'' partially hidden in the ``sea'', which are resultant from the ACF of the pulse and the randomness of the communication data, respectively. This offers important insights to the design of the waveform, pulse shaping, and constellation design for ISAC systems.

\section{The ACF of Pulse-Shaped Random ISAC Signals}\label{ACF_Sec}
\subsection{Frequency-Domain Representation of the ACF}
We commence by showing the basic structure of the circulant matrix in the following lemma.
\begin{lemma}
    A size-$N$ circulant matrix $\mathbf{C}$ can be diagonalized by the DFT matrix in the form of
    \begin{equation}
        \mathbf{C} = \sqrt{N}\mathbf{F}_N^H\operatorname{Diag}(\mathbf{F}_N\mathbf{c})\mathbf{F}_{N},
    \end{equation}
    where $\mathbf{c}$ is the first column of $\mathbf{C}$, and $\mathbf{F}_N$ is the normalized DFT matrix of size $N$, with its $(m,n)$-th entry being defined as $\frac{1}{\sqrt{N}}e^{-\frac{j2\pi(m-1)(n-1)}{N}}$.
\end{lemma}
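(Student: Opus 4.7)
The plan is to verify the diagonalization directly by exhibiting the eigenvectors and eigenvalues of $\mathbf{C}$. Since $\mathbf{C}$ is circulant, its $(m,n)$-th entry has the form $C_{m,n}=c_{(m-n)\bmod N}$, where $c_0,c_1,\ldots,c_{N-1}$ are the entries of the first column $\mathbf{c}$. I would take the $k$-th column of $\mathbf{F}_N^H$, call it $\mathbf{v}_k$, whose $m$-th entry is $\frac{1}{\sqrt{N}}e^{j2\pi(m-1)(k-1)/N}$, and compute $\mathbf{C}\mathbf{v}_k$ coordinate-wise. The key step is the index substitution $l=(m-n)\bmod N$, which factors the complex exponential cleanly into a term depending only on $m$ times a sum depending only on $k$.

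Concretely, after the substitution, one finds
\begin{equation*}
(\mathbf{C}\mathbf{v}_k)_m \;=\; (\mathbf{v}_k)_m\sum_{l=0}^{N-1} c_l\, e^{-j2\pi l(k-1)/N},
\end{equation*}
which means $\mathbf{v}_k$ is an eigenvector. The remaining step is to recognize the trailing sum as $\sqrt{N}\,(\mathbf{F}_N\mathbf{c})_k$, which follows immediately from the definition of the normalized DFT matrix given in the lemma. Stacking the $N$ eigen-relations into matrix form yields $\mathbf{C}\mathbf{F}_N^H=\mathbf{F}_N^H\operatorname{Diag}\!\left(\sqrt{N}\,\mathbf{F}_N\mathbf{c}\right)$, and multiplying on the right by $\mathbf{F}_N$ (using unitarity $\mathbf{F}_N^H\mathbf{F}_N=\mathbf{I}_N$) gives the claimed identity.

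An alternative I would consider, perhaps cleaner notationally, is to write $\mathbf{C}=\sum_{k=0}^{N-1} c_k \boldsymbol{\Pi}^k$, where $\boldsymbol{\Pi}$ is the cyclic shift matrix (the circulant with first column $[0,1,0,\ldots,0]^T$), and then diagonalize $\boldsymbol{\Pi}$ once and for all using the DFT. In this route, one only has to verify $\boldsymbol{\Pi}=\sqrt{N}\mathbf{F}_N^H\operatorname{Diag}(\mathbf{F}_N\mathbf{e}_2)\mathbf{F}_N$ and then invoke the fact that any polynomial in a diagonalized matrix is diagonalized by the same similarity transformation. The only real obstacle in either approach is bookkeeping of the index offsets (the paper indexes vectors starting at $1$, whereas the circulant structure is most naturally expressed with indices modulo $N$ starting at $0$); I would fix this convention at the outset to avoid off-by-one errors when identifying the sum with $\sqrt{N}\,\mathbf{F}_N\mathbf{c}$. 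No deeper difficulty arises, since this is a textbook identity and the computation is essentially one line once the substitution is made.
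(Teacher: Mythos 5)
Your proof is correct: the eigenvector verification with the substitution $l=(m-n)\bmod N$, the identification of the eigenvalue with $\sqrt{N}\,(\mathbf{F}_N\mathbf{c})_k$, and the final step using unitarity of $\mathbf{F}_N$ all check out (as does the alternative route via powers of the cyclic shift matrix). The paper itself states this lemma without proof, treating it as the standard circulant-diagonalization fact, and your argument is exactly the textbook derivation one would cite for it.
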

Since both $\mathbf{J}_k$ and $\mathbf{P}$ are circulant matrices, we have
\begin{align}
    &{\mathbf{J}}_k  =  \sqrt{LN}\mathbf{F}_{LN}^H\operatorname{Diag}(\mathbf{f}_{LN-k+1})\mathbf{F}_{LN},\\
    & \mathbf{P} = \sqrt{LN}\mathbf{F}_{LN}^H\operatorname{Diag}(\mathbf{F}_{LN}\mathbf{p})\mathbf{F}_{LN},
\end{align}
where $\mathbf{f}_{n}$ is the $n$th column of $\mathbf{F}_{LN}$. The ACF can then be alternatively expressed as
\begin{align}
    R_k  &\nonumber = \tilde{\mathbf{x}}^H{{\mathbf{J}}_{{k}}}\tilde{\mathbf{x}} = \sqrt{LN}\tilde{\mathbf{x}}^H\mathbf{F}_N^H\operatorname{Diag}(\mathbf{f}_{LN-k+1})\mathbf{F}_{LN}\tilde{\mathbf{x}}\\
    &\nonumber = \mathbf{x}_{\rm up}^H\mathbf{P}^H\mathbf{F}_N^H\operatorname{Diag}(\mathbf{f}_{LN-k+1})\mathbf{F}_{LN}\mathbf{P}\mathbf{x}_{\rm up}\\
    &\nonumber = (LN)^{\frac{3}{2}}\mathbf{x}_{\rm up}^H\mathbf{F}_{LN}^H\operatorname{Diag}(\mathbf{F}_{LN}\mathbf{p}\odot\mathbf{f}_{LN-k+1}\odot \mathbf{F}_{LN}^\ast\mathbf{p}^\ast)\\
    &\quad\;\;\cdot\mathbf{F}_{LN}\mathbf{x}_{\rm up}.
\end{align}
The DFT of the up-sampling signal $\mathbf{x}_{\rm up}$ is known to be the periodic extension of the spectrum of the original signal, namely,
\begin{align}\label{periodic_extension}
    \mathbf{F}_{LN}\mathbf{x}_{\rm up} &\nonumber = \sqrt{1/LN}\left[\mathbf{x}^T\mathbf{F}_N^T,\mathbf{x}^T\mathbf{F}_N^T,\ldots,\mathbf{x}^T\mathbf{F}_N^T\right]^T \\ &=\sqrt{1/LN}\left[\mathbf{s}^T\mathbf{V}^\ast,\mathbf{s}^T\mathbf{V}^\ast,\ldots,\mathbf{s}^T\mathbf{V}^\ast\right]^T\in \mathbb{C}^{LN \times 1},
\end{align}
where $\mathbf{V} = \left[\mathbf{v}_1,\mathbf{v}_2,\ldots,\mathbf{v}_N\right]\in \mathbb{C}^{N \times N}$ is defined as $\mathbf{V} = \mathbf{U}^H\mathbf{F}_N^H$, such that $\mathbf{F}_N\mathbf{x} = \mathbf{V}^H\mathbf{s}$. 

Let $\mathbf{g} = \left[g_1,g_2,\ldots,g_{LN}\right]^T$ be the squared spectrum of $\sqrt{N}\mathbf{p}$, i.e., 
\begin{equation}
    \mathbf{g} = N\mathbf{F}_{LN}\mathbf{p}\odot\mathbf{F}_{LN}^\ast\mathbf{p}^\ast.
\end{equation}
Note that $\sum\nolimits_{n = 1}^{LN}g_n = N$ due to the normalized pulse $\mathbf{p}$. To eliminate the ISI, Nyquist pulse shaping with a roll-off factor $0\le\alpha\le1$ is adopted, occupying only the first and last periods of \eqref{periodic_extension}. With this observation, and by noting the fact that $\mathbf{f}_{LN-k+1} = \mathbf{f}_{k+1}^\ast$, we obtain a frequency-domain representation of the ACF as
\begin{equation}
R_k = \sum\limits_{n = 1}^{N}g_n|\mathbf{v}_n^H\mathbf{s}|^2 e^{\frac{j2\pi k(n-1)}{LN}} + g_{(L-1)N+n}|\mathbf{v}_n^H\mathbf{s}|^2 e^{\frac{j2\pi k(n-N-1)}{LN}}.
\end{equation}
where $g_n$ are set to zero for $n\in \left[N+1,(L-1)N\right]$ due to $\alpha \le 1$. The frequency-domain condition of the Nyquist pulse, known as the \textit{folded spectrum criterion}\cite{proakis2008digital}, requires that
\begin{equation}
    g_{(L-1)N+n} = 1-g_n, \quad n = 1,2,\ldots,N,
\end{equation}
leading to 
\begin{equation}\label{compact_expression}
    R_k = \sum\limits_{n = 1}^{N}\tilde{g}_{n,k}|\mathbf{v}_n^H\mathbf{s}|^2 e^{\frac{j2\pi k(n-1)}{LN}},
\end{equation}
where $\tilde{g}_{n,k} = g_n + (1-g_n)e^{-\frac{j2\pi k}{L}}$.

\subsection{Characterization of the Average Squared ACF}
With the compact expression \eqref{compact_expression} at hand, we may express \eqref{squared_ACF} in an analytic form. 

\begin{thm}[Iceberg Theorem]
    The average squared ACF is
    \begin{small}
    \begin{align}\label{squared_P-ACF}
        \mathbb{E}(|{R_k}{|^2}) &\nonumber = \underbrace {N{{\left| {\tilde{\mathbf{f}}_{k + 1}^H\tilde {\mathbf{g}}_k} \right|}^2}}_{\text{Iceberg}} + \underbrace {\left\| {\tilde {\mathbf{g}}_k} \right\|^2 + ({\mu _4} - 2)N\left\| {\tilde {\mathbf{V}}\left( {\tilde {\mathbf{g}}_k \odot \tilde{\mathbf{f}}_{k + 1}^\ast} \right)} \right\|^2}_{\text{Sea Level}}\\
        & = \left|\mathbb{E}({R_k})\right|^2 + \operatorname{var}(R_k),\quad k = 0,1,\ldots,LN-1,
    \end{align}
    \end{small}
    where $\tilde{\mathbf{f}}_{k + 1}$ contains the first $N$ entries of ${\mathbf{f}}_{k + 1}$, and
    \begin{align}
        \tilde{\mathbf{V}} = \mathbf{V}\odot\mathbf{V}^\ast, \quad \tilde{\mathbf{g}}_k = \left[\tilde{g}_{1,k},\tilde{g}_{2,k},\ldots,\tilde{g}_{N,k}\right]^T.
    \end{align}
    Moreover, the iceberg and sea level are the squared mean and variance of $R_k$, respectively. 
\end{thm}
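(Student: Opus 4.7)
The natural starting point is the compact frequency-domain expression $R_k = \sum_{n=1}^{N}\tilde{g}_{n,k}\,|\mathbf{v}_n^H\mathbf{s}|^{2}\,e^{j2\pi k(n-1)/(LN)}$ established just before the theorem. Since $R_k$ is a quadratic form in $\mathbf{s}$, the whole statement reduces to evaluating the first and second moments of this scalar random variable, after which the identity $\mathbb{E}(|R_k|^{2})=|\mathbb{E}(R_k)|^{2}+\operatorname{var}(R_k)$ is automatic. The plan is therefore to compute $\mathbb{E}(R_k)$ (which should produce the ``iceberg'') and $\mathbb{E}(|R_k|^{2})$ (which should split into the iceberg plus the two ``sea level'' pieces), and then identify the terms with the compact vectorial quantities $\tilde{\mathbf{f}}_{k+1}$, $\tilde{\mathbf{g}}_k$, $\tilde{\mathbf{V}}$ appearing in the statement.

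For the mean, I would use $\mathbb{E}(|\mathbf{v}_n^H\mathbf{s}|^{2})=\|\mathbf{v}_n\|^{2}=1$, which follows immediately from $\mathbb{E}(|s|^{2})=1$ and the fact that $\mathbf{V}=\mathbf{U}^H\mathbf{F}_N^H$ is unitary; so $\mathbb{E}(R_k)=\sum_n \tilde{g}_{n,k}\,e^{j2\pi k(n-1)/(LN)}$, which after extracting the normalization of $\tilde{\mathbf{f}}_{k+1}$ is precisely the ``iceberg'' inner product $\sqrt{N}\,\tilde{\mathbf{f}}_{k+1}^H\tilde{\mathbf{g}}_k$. The heart of the proof is the second step: expanding
\begin{equation*}
\mathbb{E}(|R_k|^{2})=\sum_{m,n}\tilde{g}_{m,k}\overline{\tilde{g}_{n,k}}\,e^{j2\pi k(m-n)/(LN)}\,\mathbb{E}\bigl(|\mathbf{v}_m^H\mathbf{s}|^{2}|\mathbf{v}_n^H\mathbf{s}|^{2}\bigr),
\end{equation*}
so everything reduces to evaluating the fourth-order cross-moment $\mathbb{E}(|\mathbf{v}_m^H\mathbf{s}|^{2}|\mathbf{v}_n^H\mathbf{s}|^{2})$.

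This fourth-moment calculation is what I expect to be the main obstacle. Writing it as $\sum_{i,j,p,q}(\mathbf{v}_m)_i^\ast(\mathbf{v}_m)_j(\mathbf{v}_n)_p^\ast(\mathbf{v}_n)_q\,\mathbb{E}(s_i\bar{s}_j\bar{s}_p s_q)$, Assumptions 1 and 2 force all nonzero index patterns to come from pairings: the ``diagonal'' pattern $i=j=p=q$ contributes $\mu_4$; the two ``cross-pair'' patterns $\{i=j,p=q\}$ and $\{i=q,j=p\}$ (with $i\neq p$) each contribute $1$; the remaining pairings involving $\mathbb{E}(s^{2})$ vanish by rotational symmetry. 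Carefully assembling these, using $\mathbf{v}_m^H\mathbf{v}_n=\delta_{mn}$ from the unitarity of $\mathbf{V}$, yields the clean formula
\begin{equation*}
\mathbb{E}\bigl(|\mathbf{v}_m^H\mathbf{s}|^{2}|\mathbf{v}_n^H\mathbf{s}|^{2}\bigr)=1+\delta_{mn}+(\mu_4-2)\sum_{i=1}^{N}|V_{im}|^{2}|V_{in}|^{2},
\end{equation*}
where the ``$-2$'' appears because the diagonal pattern has been double-counted once in each of the cross-pair sums.

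Plugging this back separates $\mathbb{E}(|R_k|^{2})$ into three pieces that match the statement term by term. The ``$1$'' gives $|\sum_n\tilde{g}_{n,k}e^{j2\pi k(n-1)/(LN)}|^{2}=|\mathbb{E}(R_k)|^{2}=N|\tilde{\mathbf{f}}_{k+1}^H\tilde{\mathbf{g}}_k|^{2}$; the ``$\delta_{mn}$'' collapses the double sum to $\sum_m|\tilde{g}_{m,k}|^{2}=\|\tilde{\mathbf{g}}_k\|^{2}$; and the kurtosis term factors as
\begin{equation*}
(\mu_4-2)\sum_{i=1}^{N}\Bigl|\sum_{m=1}^{N}\tilde{g}_{m,k}e^{j2\pi k(m-1)/(LN)}|V_{im}|^{2}\Bigr|^{2},
\end{equation*}
which is exactly $(\mu_4-2)N\|\tilde{\mathbf{V}}(\tilde{\mathbf{g}}_k\odot\tilde{\mathbf{f}}_{k+1}^\ast)\|^{2}$ once the definitions $\tilde{\mathbf{V}}=\mathbf{V}\odot\mathbf{V}^\ast$ and $(\tilde{\mathbf{f}}_{k+1})_m\propto e^{-j2\pi k(m-1)/(LN)}$ are used. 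Finally, since the first piece equals $|\mathbb{E}(R_k)|^{2}$ and the remaining two sum to $\operatorname{var}(R_k)$, the decomposition $|\mathbb{E}(R_k)|^{2}+\operatorname{var}(R_k)$ follows, completing the proof.
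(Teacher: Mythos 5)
Your proposal is correct and follows essentially the same route as the paper: expand $|R_k|^2$ as a double sum over $m,n$, evaluate the fourth-order symbol moments under Assumptions 1--2, and identify the three resulting pieces with $N|\tilde{\mathbf{f}}_{k+1}^H\tilde{\mathbf{g}}_k|^2$, $\|\tilde{\mathbf{g}}_k\|^2$, and $(\mu_4-2)N\|\tilde{\mathbf{V}}(\tilde{\mathbf{g}}_k\odot\tilde{\mathbf{f}}_{k+1}^\ast)\|^2$. The only difference is presentational: your scalar pairing identity $\mathbb{E}\bigl(|\mathbf{v}_m^H\mathbf{s}|^2|\mathbf{v}_n^H\mathbf{s}|^2\bigr)=1+\delta_{mn}+(\mu_4-2)\|\mathbf{v}_m\odot\mathbf{v}_n\|^2$ reproduces exactly what the paper obtains via the vectorized moment matrix $\mathbf{S}=\mathbb{E}(\tilde{\mathbf{s}}\tilde{\mathbf{s}}^H)$ and its decomposition $\mathbf{S}=\mathbf{I}_{N^2}+\mathbf{S}_1+\mathbf{S}_2$, so your argument is a self-contained equivalent of the paper's Lemma-2-based computation.
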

\begin{proof}
    See Appendix. \ref{Theorem_1_proof}.
\end{proof}

\noindent Observe that the ``iceberg'' part is the squared IDFT of a zero-padded version of $\tilde{\mathbf{g}}_k$, and is also equivalent to the squared ACF of the pulse shaping filter. This equivalence arises because the IDFT of $\tilde{g}_{n,k} = g_n + (1-g_n)e^{-\frac{j2\pi k}{L}}$ can be interpreted as the IDFT of $g_n$ combined with that of a frequency-shifted version of $1-g_n$, which is exactly the IDFT of the squared spectrum of the Nyquist pulse $\mathbf{p}$, yielding its ACF according to the Wiener–Khinchin theorem. 
\begin{corollary}
    Under the Nyquist pulse shaping, the average mainlobe level depends only on the kurtosis of the constellation, which is
    \begin{equation}\label{mainlobe_0}
        \mathbb{E}(|R_0|^2) = N^2 + (\mu_4 - 1)N.
    \end{equation}
\end{corollary}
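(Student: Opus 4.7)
The plan is to specialize the compact ACF expression \eqref{compact_expression} to the zero lag $k=0$, where the Nyquist folded-spectrum condition collapses both the pulse and the basis dependence. Substituting $k=0$ into $\tilde g_{n,k}=g_n+(1-g_n)e^{-j2\pi k/L}$ makes the exponential equal $1$, so $\tilde g_{n,0}=g_n+(1-g_n)=1$ identically in $n$ regardless of the pulse $\mathbf p$; the frequency factor $e^{j2\pi k(n-1)/(LN)}$ likewise becomes $1$. Hence $R_0 = \sum_{n=1}^{N}|\mathbf v_n^H\mathbf s|^2 = \mathbf s^H\mathbf V\mathbf V^H\mathbf s = \|\mathbf s\|^2$, using the unitarity $\mathbf V\mathbf V^H = \mathbf U^H\mathbf F_N^H\mathbf F_N\mathbf U = \mathbf I$. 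Thus the zero-lag ACF is simply the transmitted symbol energy and, in particular, independent of both the modulation basis $\mathbf U$ and the pulse shape $\mathbf p$---already explaining the pulse-independent nature of \eqref{mainlobe_0}.

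The remaining step is a one-line second-moment calculation. Expanding $|R_0|^2 = \big(\sum_{n=1}^N |s_n|^2\big)^2$ into its diagonal and off-diagonal index pairs and using the i.i.d.\ property of $\mathbf s$ together with $\mathbb{E}(|s|^2)=1$ (Assumption~1) and the kurtosis definition $\mathbb{E}(|s|^4)=\mu_4$, the diagonal terms contribute $N\mu_4$ and the off-diagonal terms contribute $N(N-1)$. Summing gives $N\mu_4 + N(N-1) = N^2 + (\mu_4-1)N$, which is exactly \eqref{mainlobe_0}.

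As a consistency check with Theorem~1, the ``iceberg'' term recovers $|\mathbb{E}(R_0)|^2 = N^2$ because $\mathbb{E}(R_0)=N$, while the ``sea-level'' term contracts under the identity $\tilde{\mathbf V}\mathbf 1_N = \mathbf 1_N$---each row of the unitary $\mathbf V$ being a unit vector---to $\operatorname{var}(R_0) = (\mu_4-1)N$, split as $\|\tilde{\mathbf g}_0\|^2 = N$ from the symbol-energy variance and $(\mu_4-2)N$ from the kurtosis correction. I do not expect any genuine obstacle in this argument: the decisive observation is the Nyquist cancellation $g_n+(1-g_n)=1$ at $k=0$, after which the corollary is immediate and does not require invoking the full Iceberg Theorem.
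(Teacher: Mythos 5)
Your proof is correct, but it takes a genuinely different route from the paper. The paper proves the corollary by specializing its Theorem~1: it sets $\tilde g_{n,0}=1$, notes $\tilde{\mathbf f}_1=\frac{1}{\sqrt N}\mathbf 1_N$, evaluates the iceberg and sea-level terms, and invokes the unistochastic identity $\tilde{\mathbf V}\mathbf 1_N=\mathbf 1_N$ to get $N^2+N+(\mu_4-2)N$. You instead bypass Theorem~1 altogether: from the compact expression with $\tilde g_{n,0}=1$ and the unitarity of $\mathbf V$ you obtain the deterministic identity $R_0=\|\mathbf s\|^2$, and then a direct fourth-moment computation over i.i.d.\ symbols gives $\mathbb{E}(|R_0|^2)=N\mu_4+N(N-1)=N^2+(\mu_4-1)N$. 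Both arguments are sound, and your consistency check ($\mathbb{E}(R_0)=N$, $\operatorname{var}(R_0)=(\mu_4-1)N$ split as $N$ plus $(\mu_4-2)N$) reproduces exactly the paper's computation. What your route buys is transparency and a touch more generality: the identity $R_0=\|\mathbf s\|^2$ makes the independence from both $\mathbf U$ and $\mathbf p$ immediate, and the moment calculation uses only Assumption~1 and the i.i.d.\ property, not the rotational-symmetry Assumption~2 on which Lemma~2 and hence Theorem~1 rest (so your argument also covers, e.g., BPSK). What the paper's route buys is economy once Theorem~1 is in place, and it exercises the unistochastic identity $\tilde{\mathbf V}\mathbf 1_N=\mathbf 1_N$ that recurs in the subsequent analysis.
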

\begin{proof}
    We have $\tilde{g}_{n,0} = 1$ by its definition. Moreover, note that $\tilde{\mathbf{f}}_1 = \frac{1}{\sqrt{N}}\mathbf{1}_N$. This indicates that
    \begin{equation}\label{mainlobe_1}
        N{\left| {{\mathbf{f}}_{1}^H\tilde {\mathbf{g}}_0} \right|^2} = N^2, \quad \left\| {\tilde {\mathbf{g}}}_0 \right\|^2 = N, \quad {\tilde {\mathbf{g}}_0\odot {\tilde{\mathbf{f}}}_{1}^\ast} = \frac{1}{\sqrt{N}}\mathbf{1}_N.
    \end{equation}
    Furthermore, since $\tilde{\mathbf{V}} = \mathbf{V}\odot\mathbf{V}^\ast$ is a unistochastic matrix\footnote{A unistochastic matrix is generated from the entry-wise square of a unitary matrix, which is also bistochastic.} due to the unitarity of $\mathbf{V}$, we have
    \begin{equation}\label{mainlobe_2}
        \tilde{\mathbf{V}}\mathbf{1}_N = \mathbf{1}_N.
    \end{equation}
    Substituting \eqref{mainlobe_1} and \eqref{mainlobe_2} to \eqref{ERk2} yields \eqref{mainlobe_0}.
\end{proof}
As an illustrative example, Fig. \ref{iceberg_example} compares the average squared ACF of a pulse-shaped SC signal with the squared ACF of the pulse itself. The constellation used here is 16-QAM, the pulse shaping filter is the well-known root-raised cosine (RRC) filter with a roll-off factor of $\alpha = 0.35$, and the number of symbols and oversampling ratio are set to $N = 128$ and $L = 10$, respectively. The results show that the ACF of the random SC signal closely matches the ACF of the pulse within the delay region $\left[-2, 2\right]$, corresponding to the ``tip'' of the iceberg. Beyond this region, the sidelobe level is primarily attributed to the ``sea level'' component. 

\noindent\textbf{Remark 1} (Iceberg Metaphor): Theorem 1 reveals that the average squared ACF consists of two components: the ``iceberg'' (squared mean of $R_k$) and the ``sea level'' (variance of $R_k$). Specifically, the ``iceberg'' represents the squared ACF of the time-domain pulse. This part depends on the pulse shaping, and determines the overall shape of the ACF, which primarily impacts the sensing performance for targets near the mainlobe. In contrast, the ``sea level'' arises from the randomness of the data payload, and affects the sensing performance for targets further from the mainlobe. In the subsequent subsection, we will demonstrate that the ``sea level'' can be effectively reduced through the coherent integration technique.

\subsection{Coherent Integration}
To further enhance the sensing performance in practical scenarios, a coherent integration technique can be employed. Assuming that the targets remain staitionary across $M$ transmission slots, one may randomly realize $M$ i.i.d. symbol sequences $\mathbf{s}_1, \mathbf{s}_2,\ldots,\mathbf{s}_M$ from certain constellation, and average over their corresponding MF output $\tilde{y}_i^{(m)}$, given by
\begin{equation}\label{CI_MF}
    \frac{1}{M}\sum\limits_{m = 1}^{M}\tilde{y}_i^{(m)} = \frac{1}{M}\sum\limits_{q = 1}^{Q}{{\alpha _q}}\sum\limits_{m = 1}^M R_{\tau_q-i}^{(m)} + \tilde{z}_i^{(m)},
\end{equation}
where $R_k^{(m)}$ represents the ACF of the $m$th realization of the ISAC signal, and $\tilde{z}_i^{(m)}$ is the noise. To evaluate the sensing performance, \eqref{CI_MF} requires an analysis of the coherently integrated ACF of random ISAC signals, which is
\begin{equation}
  \overline{R}_k = \frac{1}{M}\sum\limits_{m = 1}^{M}R_k^{(m)} = \frac{1}{M}\sum\limits_{m = 1}^{M}\sum\limits_{n = 1}^{N}\tilde{g}_{n,k}|\mathbf{v}_n^H\mathbf{s}_m|^2 e^{\frac{j2\pi k(n-1)}{LN}}.
\end{equation}
Note that $\overline{R}_k$ is still a complex-valued function. To assess its mainlobe and sidelobe levels, we analyze again the expectation of its squared value in the following corollary. 

\begin{corollary}[Sea Level Reduction via Coherent Integration]
Through coherently integrating $R_k$ over $M$ transmission slots, the average squared ACF becomes
    \begin{align}\label{CIed_squared_P-ACF}
        &\nonumber\mathbb{E}(|\overline{R}_k|^2)  = \underbrace {N{{\left| {\tilde{\mathbf{f}}_{k + 1}^H\tilde {\mathbf{g}}_k} \right|}^2}}_{\text{Iceberg}} \\
        & + \underbrace {\frac{1}{M}\left\{\left\| {\tilde {\mathbf{g}}_k} \right\|^2 + ({\mu _4} - 2)N\left\| {\tilde {\mathbf{V}}\left( {\tilde {\mathbf{g}}_k\odot \tilde{\mathbf{f}}_{k + 1}^\ast} \right)} \right\|^2\right\}}_{\text{Sea Level}}.
    \end{align}
\end{corollary}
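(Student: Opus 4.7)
The plan is to exploit the i.i.d.\ nature of the symbol sequences $\mathbf{s}_1,\ldots,\mathbf{s}_M$ across transmission slots, which makes the ACF realizations $R_k^{(1)},\ldots,R_k^{(M)}$ themselves i.i.d.\ complex random variables. The quantity $\overline{R}_k$ is then simply their empirical mean, so the classical identity $\mathbb{E}(|\overline{R}_k|^2) = |\mathbb{E}(\overline{R}_k)|^2 + \operatorname{var}(\overline{R}_k)$ should yield the result directly once I show that averaging preserves the mean but scales the variance by $1/M$.

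First, I would verify that $\mathbb{E}(\overline{R}_k) = \mathbb{E}(R_k)$: by linearity of expectation and the fact that all $\mathbf{s}_m$ share the same distribution, each term in the average contributes $\mathbb{E}(R_k)$, so $\mathbb{E}(\overline{R}_k) = \mathbb{E}(R_k)$. Hence, from Theorem~1, $|\mathbb{E}(\overline{R}_k)|^2 = N|\tilde{\mathbf{f}}_{k+1}^H\tilde{\mathbf{g}}_k|^2$, which is exactly the ``iceberg'' term and is unchanged by the integration.

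Next, I would compute the variance by expanding
\begin{equation}
\operatorname{var}(\overline{R}_k) = \frac{1}{M^2}\sum_{m=1}^{M}\sum_{m'=1}^{M}\mathbb{E}\!\left[\left(R_k^{(m)} - \mathbb{E}(R_k)\right)\left(R_k^{(m')} - \mathbb{E}(R_k)\right)^\ast\right].
\end{equation}
For $m \neq m'$, independence of $\mathbf{s}_m$ and $\mathbf{s}_{m'}$ makes the cross terms vanish; only the $M$ diagonal terms survive, each equal to $\operatorname{var}(R_k)$. This gives $\operatorname{var}(\overline{R}_k) = \operatorname{var}(R_k)/M$. Invoking Theorem~1 once more to substitute the closed-form expression for $\operatorname{var}(R_k)$ yields the stated scaling of the ``sea level'' by $1/M$.

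Since the proof reduces to a standard i.i.d.\ averaging argument combined with the already-established Theorem~1, there is no substantive obstacle; the only thing to watch is the bookkeeping of complex conjugates when checking that the cross-term vanishing argument applies verbatim to $\mathbb{E}[R_k^{(m)}(R_k^{(m')})^\ast]$ rather than to a real inner product. Once that is confirmed, the decomposition in \eqref{CIed_squared_P-ACF} follows immediately.
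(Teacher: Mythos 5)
Your proof is correct and follows essentially the same route as the paper: the paper's own argument simply observes that $\overline{R}_k$ is the sample mean of $M$ i.i.d.\ copies of $R_k$, so the mean (iceberg) is unchanged while the variance (sea level) from Theorem~1 is scaled by $1/M$. Your explicit expansion of the cross terms and the complex-conjugate bookkeeping is just a more detailed write-up of the same standard i.i.d.\ averaging argument.
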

\begin{proof}
    Observe the fact that $\overline{R}_k$ is nothing but the sample mean of $R_k$ averaged over $M$ i.i.d. instances. It is straightforward that the expectation keeps unchanged, while the variance, namely, the sea level of $\overline{R}_k$ is reduced by a factor of $M$. 
\end{proof}
\begin{corollary}
    The mainlobe level after coherent integration by $M$ times becomes
    \begin{equation}\label{mainlobe_CIed}
        \mathbb{E}(|\overline{R}_0|^2) = N^2 + \frac{(\mu_4 - 1)N}{M}.
    \end{equation}
\end{corollary}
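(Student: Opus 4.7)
The plan is to obtain Corollary 3 as a direct specialization of Corollary 2 at lag $k=0$, reusing the three identities already established in the proof of Corollary 1. Since Corollary 2 leaves the ``iceberg'' untouched and only rescales the ``sea level'' by $1/M$, the result should fall out without any fresh variance computation.

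First I would set $k=0$ in the expression \eqref{CIed_squared_P-ACF} from Corollary 2. The iceberg term $N|\tilde{\mathbf{f}}_1^H \tilde{\mathbf{g}}_0|^2$ has already been evaluated in \eqref{mainlobe_1} of Corollary 1's proof to be exactly $N^2$, using $\tilde g_{n,0}=1$ and $\tilde{\mathbf{f}}_1 = \frac{1}{\sqrt{N}}\mathbf{1}_N$. So the iceberg contribution is $N^2$, independent of $M$.

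Next I would compute the sea-level factor inside the braces at $k=0$. From $\tilde{\mathbf{g}}_0 = \mathbf{1}_N$ we immediately get $\|\tilde{\mathbf{g}}_0\|^2 = N$. For the second term, the identity $\tilde{\mathbf{g}}_0 \odot \tilde{\mathbf{f}}_1^\ast = \tfrac{1}{\sqrt{N}}\mathbf{1}_N$ from \eqref{mainlobe_1} combined with the unistochasticity relation $\tilde{\mathbf{V}}\mathbf{1}_N = \mathbf{1}_N$ from \eqref{mainlobe_2} gives $\|\tilde{\mathbf{V}}(\tilde{\mathbf{g}}_0 \odot \tilde{\mathbf{f}}_1^\ast)\|^2 = 1$. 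Putting the two pieces together, the bracketed sea-level quantity reduces to $N + (\mu_4-2)N = (\mu_4-1)N$, and dividing by $M$ yields $(\mu_4-1)N/M$. Summing with the iceberg gives \eqref{mainlobe_CIed}.

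There is essentially no obstacle here; the corollary is almost a restatement of Corollary 1 with the variance attenuated by $M$. The only point that deserves a one-line justification is that the iceberg is genuinely independent of $M$, which is exactly the content of Corollary 2 (the mean of $\overline{R}_k$ equals the mean of $R_k$). Everything else is a recycling of the algebraic simplifications \eqref{mainlobe_1}--\eqref{mainlobe_2} already in hand.
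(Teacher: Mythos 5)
Your proposal is correct and follows exactly the route the paper intends: the paper states this corollary without a separate proof because it is precisely the $k=0$ specialization of Corollary 2, with the iceberg and sea-level terms evaluated via the same identities \eqref{mainlobe_1}--\eqref{mainlobe_2} used for Corollary 1. Your evaluation of the bracketed term as $(\mu_4-1)N$ and the observation that the iceberg is unaffected by $M$ match the paper's reasoning.
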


\begin{figure}[!t]
	\centering
	\includegraphics[width = \columnwidth]{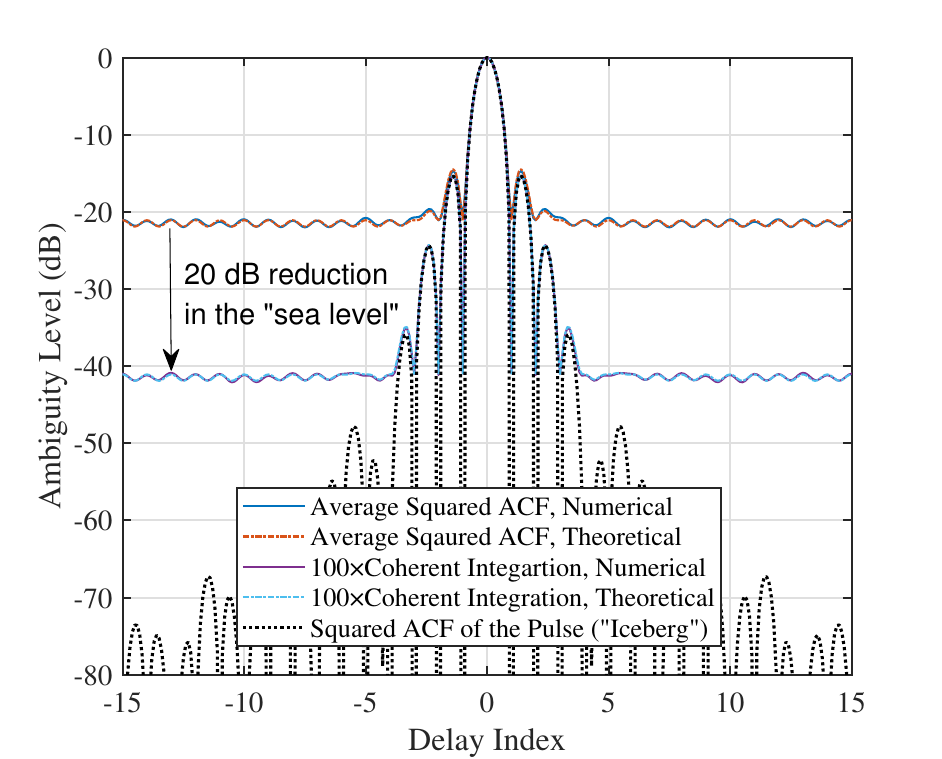}
	\caption{The average squared ACF and its coherent integration version of an SC signal, with 16-QAM constellation and $\alpha = 0.35$ RRC pulse shaping, $N = 128$, $L = 10$, $M = 100$.}
    \label{iceberg_example}
\end{figure} 

\noindent\textbf{Remark 2}: Corollary 2 demonstrates that coherent integration can reduce the ``sea level'' component of the average squared ACF by a factor of $M$. This effect is also illustrated in Fig. \ref{iceberg_example}, where the ACF obtained from $M = 100$ coherent integrations is compared with its no-integration counterpart, with all other parameters remaining consistent. As predicted by Corollary 2, a 20 dB reduction in the ``sea level'' is observed, unveiling more of the ``iceberg'' within the delay region $[-4, 4]$. From Corollary 2, it is also evident that the ranging performance limit of random ISAC signals relies on the chosen pulse shaping filter, as $\mathbb{E}(|\overline{R}_k|^2)$ approaches the ``iceberg'' when $M\to\infty$.

\begin{figure*}[htbp]
\normalsize
\newcounter{MYtempeqncnt3}
\setcounter{MYtempeqncnt3}{\value{equation}}
\setcounter{equation}{34}
\begin{align}
    &\mathbb{E}\left( {{{\left| {{{\overline{R}}_k^{\text{OFDM}}}} \right|}^2}} \right) = \underbrace {{{\left| {\sum\limits_{n = 1}^N {\left( {{g_n} + \left( {1 - {g_n}} \right){e^{\frac{{j2\pi k}}{L}}}} \right){e^{\frac{{j2\pi \left( {n - 1} \right)k}}{{LN}}}}} } \right|}^2}}_{\text{Iceberg}} + \underbrace {\frac{{\mu _4} - 1}{M}\left\{{N - 2\left({1 - \cos \frac{{2\pi k}}{L}} \right)\sum\limits_{n = 1}^N {{g_n}\left( {1 - {g_n}} \right)} } \right\}}_{\text{Sea Level}}. \label{OFDM_Iceberg}\\
    &\mathbb{E}\left( {{{\left| {{{\overline{R}}_k^{\text{SC}}}} \right|}^2}} \right)  = \left( {1 + \frac{{{\mu _4} - 2}}{MN}} \right)\underbrace {{{\left| {\sum\limits_{n = 1}^N {\left( {{g_n} + \left( {1 - {g_n}} \right){e^{\frac{{j2\pi k}}{L}}}} \right){e^{\frac{{j2\pi \left( {n - 1} \right)k}}{{LN}}}}} } \right|}^2}}_{\text{Iceberg}} + \underbrace {\frac{1}{M}\left\{N - 2\left( {1 - \cos \frac{{2\pi k}}{L}} \right)\sum\limits_{n = 1}^N {{g_n}\left( {1 - {g_n}} \right)}\right\} }_{\text{Sea Level}}. \label{SC_Iceberg}
\end{align}
\setcounter{equation}{\value{MYtempeqncnt3}}
\hrulefill
\vspace*{4pt}
\end{figure*}

\section{Iceberg Theorem Inspired ISAC Transmission Design}\label{iceberg_inspired_design}
The findings presented in Sec. \ref{ACF_Sec} prompt a reconsideration of ISAC transmission design when employing random signaling. Insights from Theorem 1 and Corollary 2 reveal that the average squared ACF of random ISAC signals is fundamentally influenced by the choice of modulation basis, constellation, and pulse shaping filter. We elaborate on the impact of each of these components in the following discussion.
\subsection{Optimal Modulation Basis}
Corollaries 1 and 2 indicate that the mainlobe level remains constant regardless of the chosen signaling basis when Nyquist pulse shaping is applied. To enhance sensing performance, it is crucial to design a modulation basis that reduces the sidelobe level. Notably, the modulation basis primarily influences the ``sea level'' component of the sidelobes in both \eqref{squared_P-ACF} and \eqref{CIed_squared_P-ACF}. The optimal modulation basis design thus depends on the sign of $\mu_4 - 2$, commonly referred to as the \textit{excess kurtosis}.

We first discuss the optimal modulation basis for sub-Gaussian ($\mu_4<2$) constellations, e.g., QAM and PSK, by proving the following theorem. 
\begin{thm}
    For sub-Gaussian constellations, OFDM is the only modulation basis that achieves the lowest ranging sidelobe level at every lag $k$.
\end{thm}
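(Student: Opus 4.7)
The plan is to leverage Corollary 2 to isolate the only $\mathbf{U}$-dependent contribution to $\mathbb{E}(|\overline{R}_k|^2)$. The iceberg term $N|\tilde{\mathbf{f}}_{k+1}^H\tilde{\mathbf{g}}_k|^2$ and the pulse-only part $\|\tilde{\mathbf{g}}_k\|^2/M$ of the sea level depend solely on the Nyquist pulse, so minimizing the squared ACF at each lag reduces to optimizing the single term
\begin{equation*}
\frac{(\mu_4-2)N}{M}\left\|\tilde{\mathbf{V}}\bigl(\tilde{\mathbf{g}}_k\odot\tilde{\mathbf{f}}_{k+1}^\ast\bigr)\right\|^2,
\end{equation*}
where $\tilde{\mathbf{V}}=\mathbf{V}\odot\mathbf{V}^\ast$ and $\mathbf{V}=\mathbf{U}^H\mathbf{F}_N^H$. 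Because $\mu_4<2$ for sub-Gaussian constellations, the task at every lag $k$ is to \emph{maximize} $\|\tilde{\mathbf{V}}\mathbf{w}_k\|^2$, with $\mathbf{w}_k\triangleq\tilde{\mathbf{g}}_k\odot\tilde{\mathbf{f}}_{k+1}^\ast$.

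Next I would exploit the fact that $\tilde{\mathbf{V}}=\mathbf{V}\odot\mathbf{V}^\ast$ is unistochastic and hence doubly stochastic. By Birkhoff's theorem $\tilde{\mathbf{V}}$ is a convex combination of permutation matrices, each of which is an isometry; splitting an arbitrary complex vector into real and imaginary parts and invoking the triangle inequality then yields the uniform contraction $\|\tilde{\mathbf{V}}\mathbf{w}\|\le\|\mathbf{w}\|$ for every $\mathbf{w}\in\mathbb{C}^N$. Applied to $\mathbf{w}_k$, this gives the upper bound $\|\tilde{\mathbf{V}}\mathbf{w}_k\|^2\le\|\mathbf{w}_k\|^2$, which is saturated at every $k$ simultaneously by the choice $\tilde{\mathbf{V}}=\mathbf{I}_N$. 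This choice corresponds to $\mathbf{V}=\mathbf{I}_N$ and hence to $\mathbf{U}=\mathbf{F}_N^H$, i.e.\ OFDM, establishing optimality.

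For uniqueness I would rewrite the equality $\|\tilde{\mathbf{V}}\mathbf{w}_k\|^2=\|\mathbf{w}_k\|^2$ as the quadratic-form condition $\mathbf{w}_k^H(\mathbf{I}_N-\tilde{\mathbf{V}}^T\tilde{\mathbf{V}})\mathbf{w}_k=0$ with $\mathbf{I}_N-\tilde{\mathbf{V}}^T\tilde{\mathbf{V}}\succeq\mathbf{0}$, and argue that as $k$ ranges over $0,\ldots,LN-1$ the rank-one outer products $\mathbf{w}_k\mathbf{w}_k^H$ span a subspace of Hermitian matrices rich enough to force $\mathbf{I}_N=\tilde{\mathbf{V}}^T\tilde{\mathbf{V}}$. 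A doubly stochastic matrix with orthonormal columns must be a permutation matrix, and since $\tilde{\mathbf{V}}=\mathbf{V}\odot\mathbf{V}^\ast$, this in turn forces $\mathbf{V}$ to be a monomial matrix $\mathbf{D}\mathbf{P}$ (diagonal unitary times permutation). Translating back through $\mathbf{U}=\mathbf{F}_N^H\mathbf{V}^H$ recovers OFDM up to the inessential subcarrier reordering and per-subcarrier phase rotation.

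The hard part will be the rigidity step: converting pointwise norm-equality on the specific family $\{\mathbf{w}_k\}$ into a \emph{global} isometry condition on $\tilde{\mathbf{V}}$. The contraction $\|\tilde{\mathbf{V}}\mathbf{w}\|\le\|\mathbf{w}\|$ is typically strict in most directions and its equality set need not be a subspace, so a naive spanning argument does not suffice. I expect the cleanest fix to be the positive-semidefinite formulation above, paired with a mild non-degeneracy condition on the pulse spectrum $\{g_n\}$ guaranteeing that $\{\mathbf{w}_k\mathbf{w}_k^H\}_{k=0}^{LN-1}$ probes every off-diagonal direction of $\mathbf{I}_N-\tilde{\mathbf{V}}^T\tilde{\mathbf{V}}$. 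Modulo pulse-shaping bookkeeping, this should reduce to the discrete-time optimality argument already developed in \cite{liu2024ofdm}, extended here to account for the folded-spectrum weights $\tilde{g}_{n,k}$ introduced by the Nyquist pulse.
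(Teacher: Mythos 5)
Your optimality direction is essentially the paper's argument. You isolate the same $\mathbf{U}$-dependent sea-level term $({\mu_4-2})N\|\tilde{\mathbf{V}}(\tilde{\mathbf{g}}_k\odot\tilde{\mathbf{f}}_{k+1}^\ast)\|^2$, note $\mu_4<2$ turns the problem into maximizing $\|\tilde{\mathbf{V}}\mathbf{w}_k\|^2$, and relax unistochastic to doubly stochastic. The only cosmetic difference is that you obtain the contraction $\|\tilde{\mathbf{V}}\mathbf{w}\|\le\|\mathbf{w}\|$ from Birkhoff's theorem plus the triangle inequality, whereas the paper splits $\mathbf{w}_k$ into real and imaginary parts and invokes majorization ($\tilde{\mathbf{V}}\mathbf{b}\prec\mathbf{b}$) together with Schur-convexity of the $\ell_2$ norm; these are interchangeable routes to the same bound, and both correctly yield $\tilde{\mathbf{V}}=\mathbf{I}_N$, i.e.\ $\mathbf{U}=\mathbf{F}_N^H$ (OFDM, up to subcarrier permutation and per-subcarrier phases), as a maximizer.

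The genuine gap is the uniqueness claim, which is half of the theorem (``the \emph{only} modulation basis''), and you explicitly leave it open: you concede that equality of a contraction on the specific family $\{\mathbf{w}_k\}$ does not obviously force $\tilde{\mathbf{V}}^T\tilde{\mathbf{V}}=\mathbf{I}_N$, and you hedge with an unproven ``mild non-degeneracy condition'' on $\{g_n\}$. In fact no extra condition is needed, and your own PSD reformulation closes it: since $\mathbf{M}=\mathbf{I}_N-\tilde{\mathbf{V}}^T\tilde{\mathbf{V}}\succeq\mathbf{0}$, equality at lag $k$ is equivalent to $\mathbf{w}_k\in\ker\mathbf{M}$ (you do not need the outer products $\mathbf{w}_k\mathbf{w}_k^H$ to span the Hermitian matrices, only the vectors $\mathbf{w}_k$ to span $\mathbb{C}^N$). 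The folded-spectrum structure supplies exactly this: at lags $k=mL$ one has $\tilde{g}_{n,mL}=g_n+(1-g_n)=1$, so $\mathbf{w}_{mL}$ is (up to scaling and conjugation) the $m$th DFT column, and $m=0,1,\ldots,N-1$ gives a basis of $\mathbb{C}^N$; hence $\mathbf{M}=\mathbf{0}$, $\tilde{\mathbf{V}}$ is an orthogonal doubly stochastic matrix and therefore a permutation, forcing $\mathbf{V}$ to be monomial and $\mathbf{U}$ to be OFDM up to reordering and phases. The paper handles this step differently (and tersely), by asserting that the majorization inequalities $\|\tilde{\mathbf{V}}\mathbf{b}_R\|^2\le\|\mathbf{b}_R\|^2$, $\|\tilde{\mathbf{V}}\mathbf{b}_I\|^2\le\|\mathbf{b}_I\|^2$ are all tight only when $\tilde{\mathbf{V}}$ is a permutation; note that for an isolated lag (e.g.\ $k=0$, where $\mathbf{w}_0\propto\mathbf{1}_N$) equality holds for every doubly stochastic $\tilde{\mathbf{V}}$, so any rigorous uniqueness argument must, as above, aggregate the equality conditions over all lags. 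Without that step your write-up proves optimality of OFDM but not its uniqueness.
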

\begin{proof}
See Appendix. \ref{Theorem_2_Proof}.
\end{proof}
\noindent Note that the uniqueness of OFDM as the optimal modulation basis for ranging is guaranteed, as being detailed in the proof.

The following theorem provides the optimal modulation basis for super-Gaussian ($\mu_4>2$) constellations.
\begin{thm}
    For super-Gaussian constellations, SC modulation achieves the lowest ranging sidelobe level for every $k$.
\end{thm}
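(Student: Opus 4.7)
The plan is to start from the average squared ACF formula in Corollary 2 and isolate the only term on which the modulation basis $\mathbf{U}$ has any influence. The ``iceberg'' $N |\tilde{\mathbf{f}}_{k+1}^H \tilde{\mathbf{g}}_k|^2$ and the scalar $\| \tilde{\mathbf{g}}_k \|^2$ depend solely on the pulse-shaping filter, so for super-Gaussian constellations ($\mu_4 - 2 > 0$) the minimization over $\mathbf{U}$ at each lag $k$ reduces to minimizing
$$Q_k(\mathbf{U}) \;:=\; \bigl\| \tilde{\mathbf{V}} \bigl( \tilde{\mathbf{g}}_k \odot \tilde{\mathbf{f}}_{k+1}^\ast \bigr) \bigr\|^2,$$
where $\tilde{\mathbf{V}} = \mathbf{V} \odot \mathbf{V}^\ast$ and $\mathbf{V} = \mathbf{U}^H \mathbf{F}_N^H$. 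In contrast to the sub-Gaussian setting of Theorem 2, here we are \emph{minimizing} $Q_k$ rather than maximizing it, and this single sign flip is what will swap the optimizer from OFDM to SC.

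The key technical step is to derive a lower bound on $Q_k(\mathbf{U})$ that is independent of $\mathbf{U}$ and is attained by SC. Letting $\mathbf{w} := \tilde{\mathbf{g}}_k \odot \tilde{\mathbf{f}}_{k+1}^\ast$, I would invoke the doubly stochastic property of $\tilde{\mathbf{V}}$ already exploited in the proof of Corollary 1, which gives $\mathbf{1}_N^T \tilde{\mathbf{V}} = \mathbf{1}_N^T$ and therefore the $\mathbf{U}$-independent identity
$$\mathbf{1}_N^T \tilde{\mathbf{V}} \mathbf{w} \;=\; \mathbf{1}_N^T \mathbf{w} \;=\; \tilde{\mathbf{f}}_{k+1}^H \tilde{\mathbf{g}}_k.$$
A single Cauchy--Schwarz step, $|\mathbf{1}_N^T \mathbf{y}|^2 \le N \| \mathbf{y} \|^2$ applied to $\mathbf{y} = \tilde{\mathbf{V}} \mathbf{w}$, then yields the universal bound $Q_k(\mathbf{U}) \ge | \tilde{\mathbf{f}}_{k+1}^H \tilde{\mathbf{g}}_k |^2 / N$, valid for every Nyquist pulse, every constellation, and every orthonormal basis.

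To close the argument, I would check that SC saturates this bound at every lag. For SC we have $\mathbf{U} = \mathbf{I}_N$, so $\mathbf{V} = \mathbf{F}_N^H$, every entry of which has modulus $1/\sqrt{N}$; hence $\tilde{\mathbf{V}} = \tfrac{1}{N} \mathbf{1}_N \mathbf{1}_N^T$ is the rank-one projector onto $\operatorname{span}(\mathbf{1}_N)$. Consequently $\tilde{\mathbf{V}} \mathbf{w}$ is always a scalar multiple of $\mathbf{1}_N$, achieving equality in Cauchy--Schwarz regardless of $k$ or $\mathbf{p}$. Substituting this minimum back into the expression from Corollary 2 recovers exactly \eqref{SC_Iceberg}, completing the proof. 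I do not expect a serious obstacle: every ingredient (double-stochasticity of $\tilde{\mathbf{V}}$, the frequency-domain expression of $R_k$, Cauchy--Schwarz) is already in place from Sec. \ref{ACF_Sec}. The only subtle point worth flagging is the intuition behind the OFDM/SC dichotomy: OFDM ($\tilde{\mathbf{V}} = \mathbf{I}_N$) preserves the energy of $\mathbf{w}$ and therefore maximizes $Q_k$, which is what sub-Gaussian constellations want, while SC collapses $\mathbf{w}$ onto the one-dimensional mean subspace and therefore minimizes $Q_k$, which is exactly what super-Gaussian constellations want.
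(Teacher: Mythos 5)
Your proposal is correct, and it reaches the paper's conclusion by a slightly different and more elementary route. The paper isolates the same $\mathbf{U}$-dependent term and lower-bounds $\|\tilde{\mathbf{V}}(\tilde{\mathbf{g}}_k\odot\tilde{\mathbf{f}}_{k+1}^\ast)\|^2$ via majorization: it relaxes the unistochastic constraint to the bistochastic set, notes $\mathbf{D}\tilde{\mathbf{V}}=\mathbf{D}$ with $\mathbf{D}=\frac{1}{N}\mathbf{1}_N\mathbf{1}_N^T$, invokes $\mathbf{D}\mathbf{b}_{R}\prec\tilde{\mathbf{V}}\mathbf{b}_{R}$ (and similarly for $\mathbf{b}_I$) together with Schur-convexity of the $\ell_2$ norm, and then observes that $\mathbf{V}=\mathbf{F}_N^H$ attains the bound — mirroring the machinery used for Theorem 2. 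You instead get the identical lower bound $\|\mathbf{D}(\tilde{\mathbf{g}}_k\odot\tilde{\mathbf{f}}_{k+1}^\ast)\|^2 = |\tilde{\mathbf{f}}_{k+1}^H\tilde{\mathbf{g}}_k|^2/N$ from a single Cauchy--Schwarz step combined with the column-sum identity $\mathbf{1}_N^T\tilde{\mathbf{V}}=\mathbf{1}_N^T$, which buys two small advantages: you never need the relaxation to bistochastic matrices (your bound holds directly for every unistochastic $\tilde{\mathbf{V}}$, i.e., every unitary $\mathbf{V}$), and attainment by SC is transparent, since $\tilde{\mathbf{V}}=\frac{1}{N}\mathbf{1}_N\mathbf{1}_N^T$ forces $\tilde{\mathbf{V}}\mathbf{w}\propto\mathbf{1}_N$, the exact equality condition in Cauchy--Schwarz, for every lag $k$ and every Nyquist pulse; substituting back indeed recovers \eqref{SC_Iceberg}. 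What the paper's majorization framing buys in exchange is the structural parallel with the sub-Gaussian case (same Schur-convexity argument, opposite extreme point of the bistochastic polytope), and a characterization hint that any minimizer must have constant-modulus columns — which in your language is the statement that equality requires $\tilde{\mathbf{V}}\mathbf{w}$ to be a multiple of $\mathbf{1}_N$. Note that, consistent with the theorem statement, neither argument claims uniqueness of SC (unlike the OFDM case in Theorem 2), so your proof establishes exactly what is asserted.
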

\begin{proof}
See Appendix. \ref{Theorem_3_proof}.
\end{proof}
\noindent \textbf{Remark 3}: Theorems 2 and 3 offer valuable insights into modulation schemes for ISAC signals, indicating that the optimal modulation basis is determined by the sign of the excess kurtosis of the constellation. Considering that sub-Gaussian constellations are prevalent in modern communication systems, OFDM emerges as the optimal choice for ranging tasks using MF methods under random signaling. Notably, when a Gaussian constellation is employed, i.e., $\mathbf{s} \sim \mathcal{CN}(\mathbf{0}, \mathbf{I}_N)$, the sidelobe becomes independent of the modulation basis, as the Gaussian distribution is unitary invariant.

By substituting $\mathbf{U} = \mathbf{F}_N^H$ and $\mathbf{U} = \mathbf{I}_N$ into \eqref{CIed_squared_P-ACF}, we may obtain the average squared ACFs under OFDM and SC modulations in \eqref{OFDM_Iceberg} and \eqref{SC_Iceberg} at the top of this page.
\subsection{Constellation Design}
The average squared ACF is affected by the constellation exclusively through its kurtosis. As discussed earlier, the sign of the excess kurtosis dictates the optimal choice of modulation basis. For a given pair of modulation basis and pulse shaping filter, it follows from \eqref{squared_P-ACF} that constellations with lower kurtosis result in a reduced ``sea level''. Notably, PSK constellations, with $\mu_4 = 1$, achieve the lowest possible ``sea level''. An interesting observation arises from \eqref{OFDM_Iceberg}: when using an OFDM modulation, PSK can completely ``drain the sea'' by reducing the ``sea level'' component to zero. In this case, the auto-correlation properties of the random ISAC signal are determined solely by the pulse shaping, represented as the ``iceberg'' in \eqref{squared_P-ACF}.

Let us further investigate the influence of the constellation's kurtosis from the following two aspects:
\subsubsection{Mainlobe Level}
The mainlobe level in no-integration and coherent integration cases are shown in \eqref{mainlobe_0} and \eqref{mainlobe_CIed}, respectively, which are both monotonically increasing functions of $\mu_4$. This suggests that constellations with larger kurtosis yield higher mainlobe. However, when $N$ is sufficiently large, the mainlobe level can generally be approximated as $N^2$ because $\mu_4 \ll N$, making the contribution from kurtosis negligible. This effect is particularly pronounced in the coherent integration case, where the kurtosis contribution is further reduced by a factor of $M$.
\subsubsection{Integration Efficiency}
The constellation kurtosis also affects the performance of coherent integration, particularly the efficiency in reducing the ``sea level''. Let us take the OFDM signaling as an example. According to \eqref{OFDM_Iceberg}, the sea level is proportional to $(\mu_4 - 1)/M$, where the integration efficiency can be naturally defined as $1/(\mu_4 - 1)$. For a 16-QAM constellation with $\mu_4 = 1.32$, the integration efficiency is 3.125. In contrast, a Gaussian constellation with $\mu_4 = 2$ has an integration efficiency of 1. This means that 16-QAM is more than 3 times as efficient as a Gaussian constellation. 

From the above discussion, it is evident that reducing the kurtosis of constellations is essential for improving the sensing performance. However, constellations with low kurtosis may not always support high communication rates. For example, while PSK has the smallest kurtosis, it generally achieves lower communication rates compared to QAM of the same order. To achieve a scalable tradeoff between sensing and communication, recent research has explored probabilistic constellation shaping (PCS) to optimize the input distribution of OFDM signals, thereby maximizing the achievable communication rate under a given constellation kurtosis. Readers are referred to \cite{du2024reshaping} for more technical details.

\noindent\textbf{Remark 4}: It is important to note that constellation shaping may not always be effective, particularly with SC signaling. As observed from \eqref{SC_Iceberg}, reducing kurtosis has only a minor impact on its ``iceberg'' component. This is because when $\mathbf{V} = \mathbf{F}_N^H$, we have 
\setcounter{equation}{36}
\begin{equation}
   \tilde{\mathbf{V}} = \frac{1}{N}\mathbf{1}_{N}\mathbf{1}_{N}^T,\quad  \left\| {\tilde {\mathbf{V}}\left( {\tilde {\mathbf{g}}_k \odot \tilde{\mathbf{f}}_{k + 1}^\ast} \right)} \right\|^2 = \left| {\tilde{\mathbf{f}}_{k + 1}^H\tilde {\mathbf{g}}_k} \right|^2,
\end{equation}
indicating that a small portion of the ``sea level'' retains a similar shape of the ``iceberg''.  In fact, under the same sub-Gaussian constellation, SC achieves an efficiency of (almost) 1, while OFDM always benefits from a integration efficiency of $1/(\mu_4 - 1) \ge 1$.

\subsection{Pulse Shaping Design}\label{iceberg_shaping}
Finally, we examine the pulse shaping design for random ISAC signals. 
\subsubsection{Mainlobe Width}
Due to the Nyquist property of the pulse, the ``iceberg part'' (namely, the ACF of the pulse) equals to zero when $k = mL$ where $m$ is any non-zero integer. This can be readily proved by noting that
\begin{align}
     &\nonumber{\sum\limits_{n = 1}^N {\left( {{g_n} + \left( {1 - {g_n}} \right){e^{\frac{{j2\pi mL}}{L}}}} \right){e^{\frac{{j2\pi \left( {n - 1} \right)mL}}{{LN}}}}} } \\&= {\sum\limits_{n = 1}^N{e^{\frac{{j2\pi \left( {n - 1} \right)m}}{{N}}}}}
      = N\delta_{0,m},\quad m\in\mathbb{Z}.
\end{align}
We may therefore define the mainlobe width as the distance between two first zeros, which is also the delay resolution, given by
\begin{equation}
    \Delta T = 2LT_s = 2T = \frac{1+\alpha}{B},
\end{equation}
indicating that the delay resolution is proportional to the roll-off factor $\alpha$.

\subsubsection{Sea Waves}
As observed from Fig. \ref{iceberg_example}, periodic ripples appear in the sidelobe region of the average squared ACF, which can be interpreted as ``sea waves'' over the ``sea level". This phenomenon can be explained by \eqref{squared_P-ACF} as follows:
\begin{align}\label{sea_waves}
    \left\| {\tilde {\mathbf{g}}_k} \right\|^2 &\nonumber = \sum\limits_{n = 1}^N\left|g_n+\left({1 - {g_n}} \right){e^{\frac{{j2\pi k}}{L}}}\right|^2\\
    & = N - 2\left(1-\cos{\frac{2\pi k}{L}}\right)\sum\limits_{n = 1}^N g_n(1-g_n),
\end{align}
which indicates that the ripples are generated by the cosine term in \eqref{sea_waves}. The amplitude of these ripples is proportional to the summation $\sum_{n = 1}^N g_n (1 - g_n)$. For Nyquist pulses, where $0 \le g_n \le 1$, we have $\sum_{n = 1}^N g_n (1 - g_n) \ge 0$. Notably, pulses with a larger roll-off factor generally produce higher ripples in the sidelobe region. This occurs because $g_n (1 - g_n) = 0$ when $g_n \in \left\{0, 1\right\}$, meaning that the summation is solely contributed by the roll-off portion where $0 < g_n < 1$. In the extreme case of a sinc pulse, where $g_n$ takes a rectangular shape with equal numbers of zeros and ones, the summation is exactly zero, resulting in a completely flat ``sea level''.

\subsubsection{Iceberg Shaping}
To further boost the sensing performance, it is crucial to carefully design the ``iceberg'' component of the average squared ACF. This is because the coherent integration operation reduces only the ``sea level'' part. After integrating a sufficiently large number of MF outputs, the sensing performance will depend mainly on the geometry of the ``iceberg''. While various techniques exist for designing probing waveforms with favorable auto-correlation properties, we propose a novel approach for designing Nyquist pulses by focusing on minimizing the sidelobe level of its ACF within a specified delay region.

Suppose that we are interested in detecting targets over a delay region $k \in \mathcal{K}_{sl}$. To that end, it is necessary to minimize the sidelobe level within that region. We may adopt either the integrated sidelobe level (ISL) or peak sidelobe level (PSL) as objective functions, defined as
\begin{align}
    & f_{\text{ISL}}= \sum\limits_{k \in \mathcal{K}_{sl}}{{\left| {\tilde{\mathbf{f}}_{k + 1}^H\tilde {\mathbf{g}}_k} \right|}^2}, \\
    & f_{\text{PSL}}= \mathop{\max}\limits_{k \in \mathcal{K}_{sl}} {{\left| {\tilde{\mathbf{f}}_{k + 1}^H\tilde {\mathbf{g}}_k} \right|}^2}, \label{PSL}
\end{align}
both of which are convex in $g_n$. 

Next, we discuss the constraints on $g_n$. Given a roll-off factor $\alpha$, and assuming without loss of generality that $N_{\alpha} = \alpha N$ is an integer and that $N - N_{\alpha}$ is an even number, the non-roll-off portion of $g_n$ will contain $\frac{N-N_{\alpha}}{2}$ zeros and $\frac{N-N_{\alpha}}{2}$ ones, as follows:
\begin{align}
   & g_n = 0, \quad n = 1,2,\ldots,\frac{N-N_{\alpha}}{2},\\
   & g_n = 1, \quad n = \frac{N+N_{\alpha}}{2}+1,\frac{N+N_{\alpha}}{2}+2,\ldots, N.
\end{align}
Moreover, to guarantee that the roll-off part of the spectrum is monotonically increasing, one has to ensure
\begin{equation}
    g_{n+1} - g_{n} \ge 0, n = 1,2,\ldots, N-1.
\end{equation}
Finally, maintaining a constant energy of the pulse requires
\begin{equation}
    \sum\limits_{n = 1}^N g_n = \frac{N}{2}.
\end{equation}
Note that these constraints are all linear in $g_n$. The iceberg shaping problem can be then formulated as
\begin{equation}
\mathop {\min }\limits_{{g_n} \ge 0} \;{f_{{\text{ISL}}}}\;{\text{or}}\;{f_{{\text{PSL}}}}\;\;\text{s.t.}\;\;(71) - (74),
\end{equation}
which is a convex quadratic programming (QP) that can be efficiently solved via off-the-shelf numerical tools.

\begin{figure}[!t]
	\centering
	\includegraphics[width = \columnwidth]{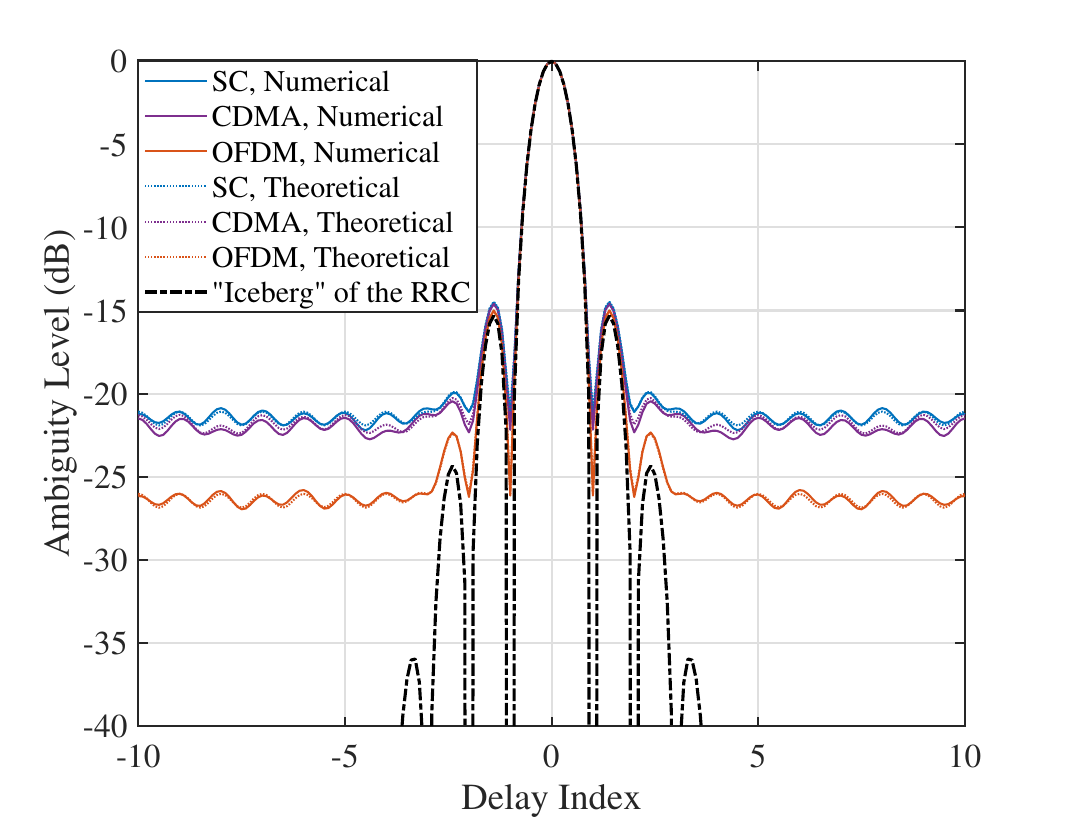}
	\caption{The average squared ACF of SC, CDMA, and OFDM signals, with 16-QAM constellation and $\alpha = 0.35$ RRC pulse shaping, $N = 128$, $L = 10$.}
    \label{SC_CDMA_OFDM}
\end{figure} 

\begin{figure}[!t]
	\centering
	\includegraphics[width = \columnwidth]{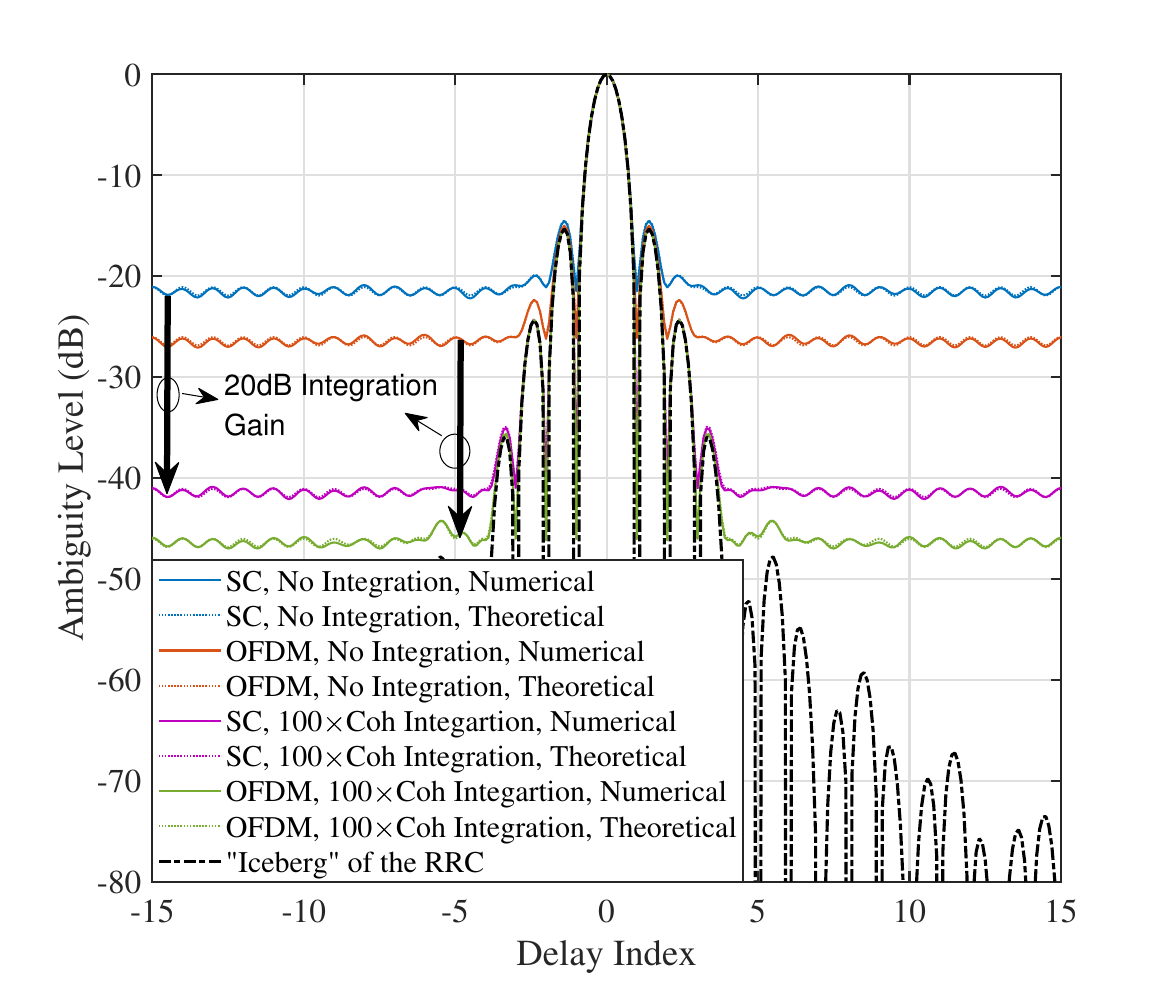}
	\caption{The average squared ACF of SC and OFDM signals under $M = 100$ coherent integration, with 16-QAM constellation and $\alpha = 0.35$ RRC pulse shaping, $N = 128$, $L = 10$.}
    \label{SC_OFDM_Coh}
\end{figure} 

\section{Numerical Results}\label{sec_5}
In this section, we present numerical results to validate the theoretical framework proposed in this paper. Unless otherwise specified, we consider an ISAC transmission of $N = 128$ random symbols under an $L = 10$ over-sampling ratio, utilizing Nyquist pulse shaping filters with a roll-off factor $\alpha = 0.35$. 

\subsection{Modulation Basis and Constellation Design}
We first illustrate the ACF performance across multiple modulation schemes. In Fig. \ref{SC_CDMA_OFDM}, we present the average squared ACF of SC, CDMA, and OFDM signals using a 16-QAM constellation and RRC pulse shaping with a roll-off factor of $\alpha = 0.35$. The CDMA modulation basis is generated using a size-$N$ Hadamard matrix. The results show that the theoretical performance aligns closely with the numerical results. Furthermore, as indicated by the theoretical findings, OFDM exhibits the lowest sidelobe level at every lag $k$ among the three modulation types, achieving a 5 dB reduction in sidelobe level compared to both SC and CDMA signals. We also depict the coherent integration performance of both OFDM and SC signaling in Fig. \ref{SC_OFDM_Coh}, using the same parameters as in Fig. \ref{SC_CDMA_OFDM} with a coherent integration count of $M = 100$. It is evident that a 20 dB reduction in sidelobe level is achievable for both SC and OFDM signals, with OFDM consistently delivering a 5 dB performance advantage over SC.

\begin{figure}[!t]
	\centering
	\includegraphics[width = \columnwidth]{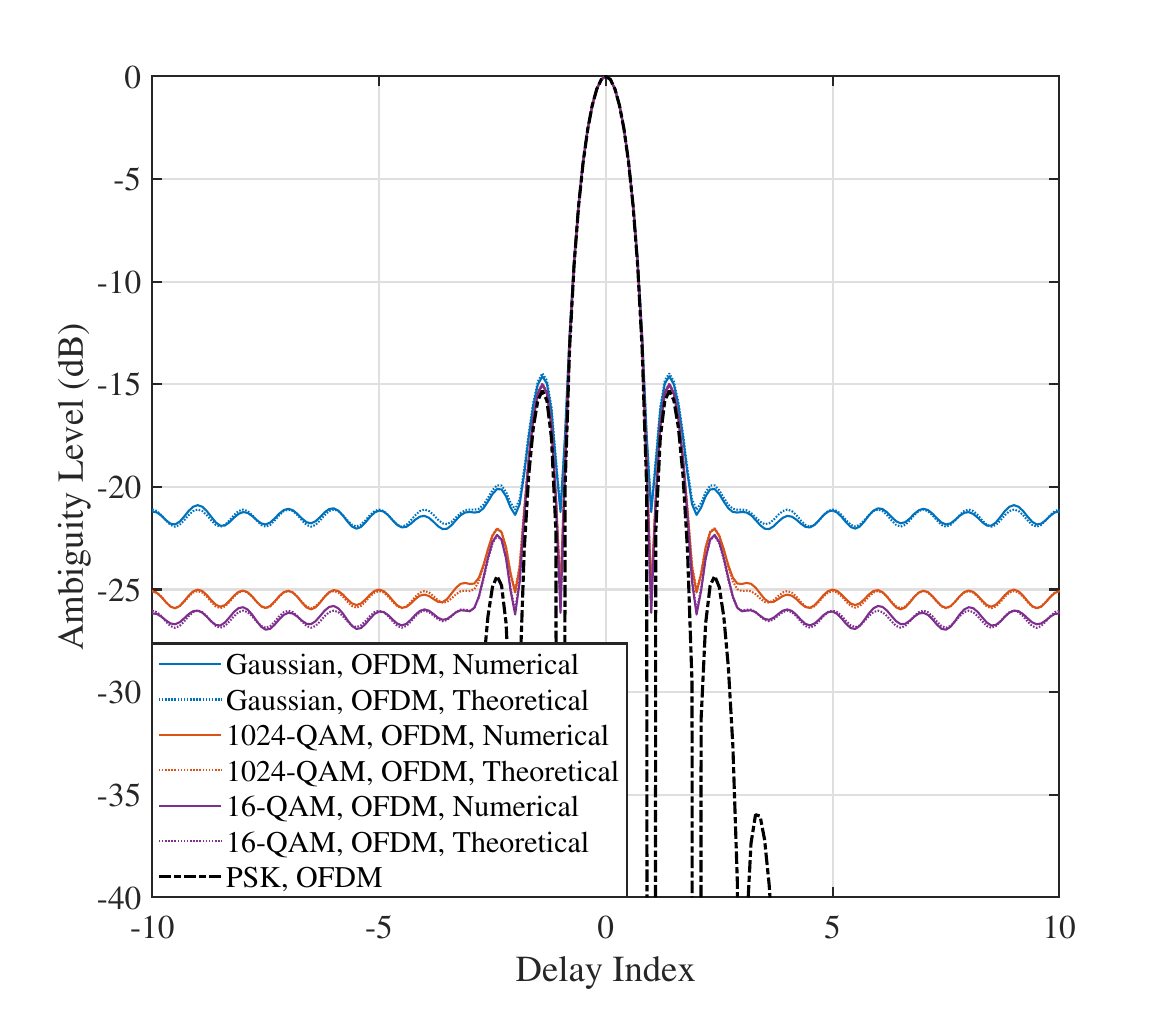}
	\caption{The average squared ACF of OFDM signals under PSK, 16-QAM, 1024-QAM, and Gaussian constellations under $\alpha = 0.35$ RRC pulse shaping, $N = 128$, $L = 10$.}
    \label{OFDM_constellations}
\end{figure} 

\begin{figure}[!t]
\centering
\subfloat[]{\includegraphics[width=0.5\columnwidth]{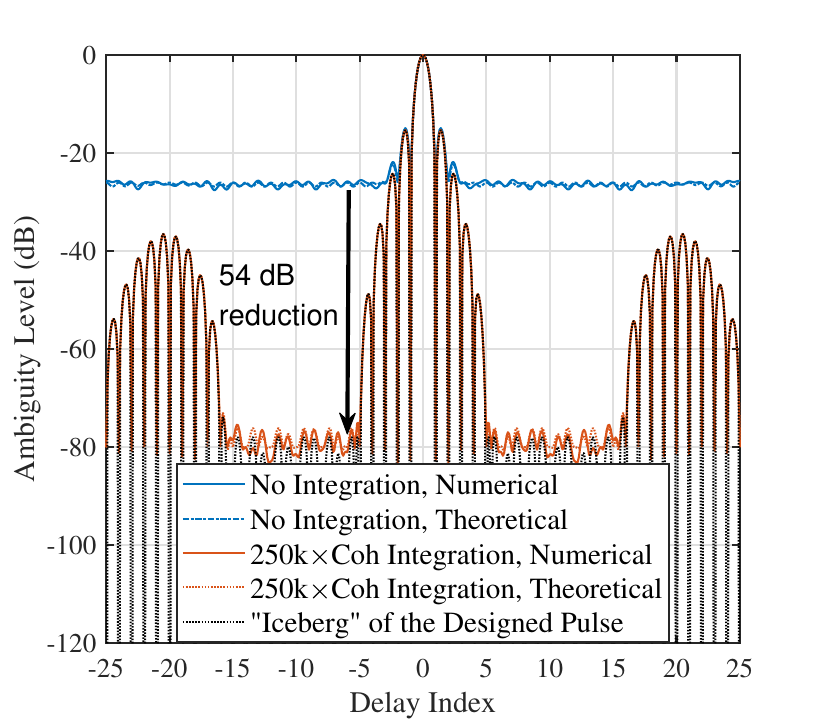}
\label{Iceberg_Coh}}
\subfloat[]{\includegraphics[width=0.5\columnwidth]{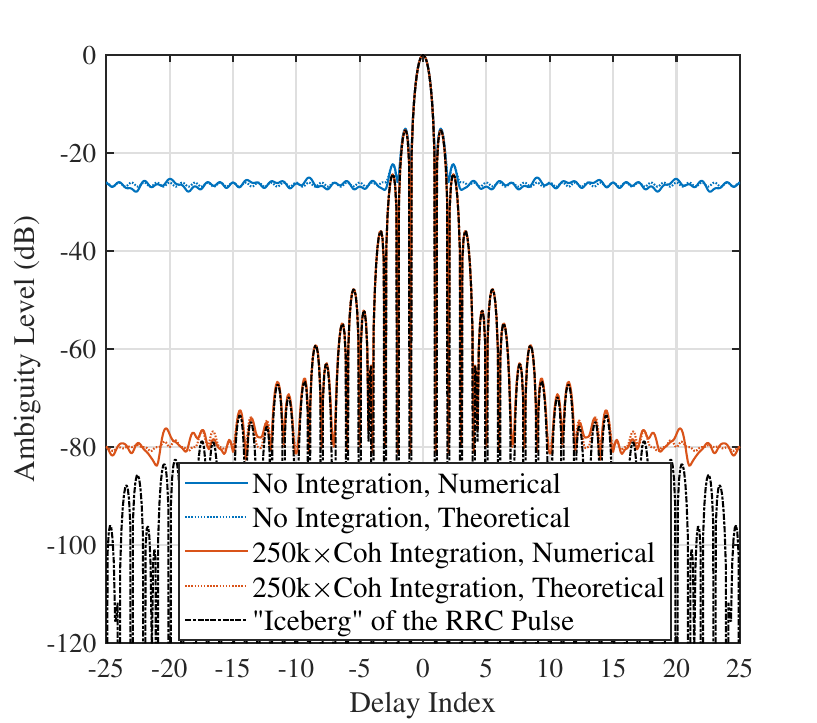}
\label{RRC_Coh}}
\vspace{0.1in}
\subfloat[]{\includegraphics[width=0.5\columnwidth]{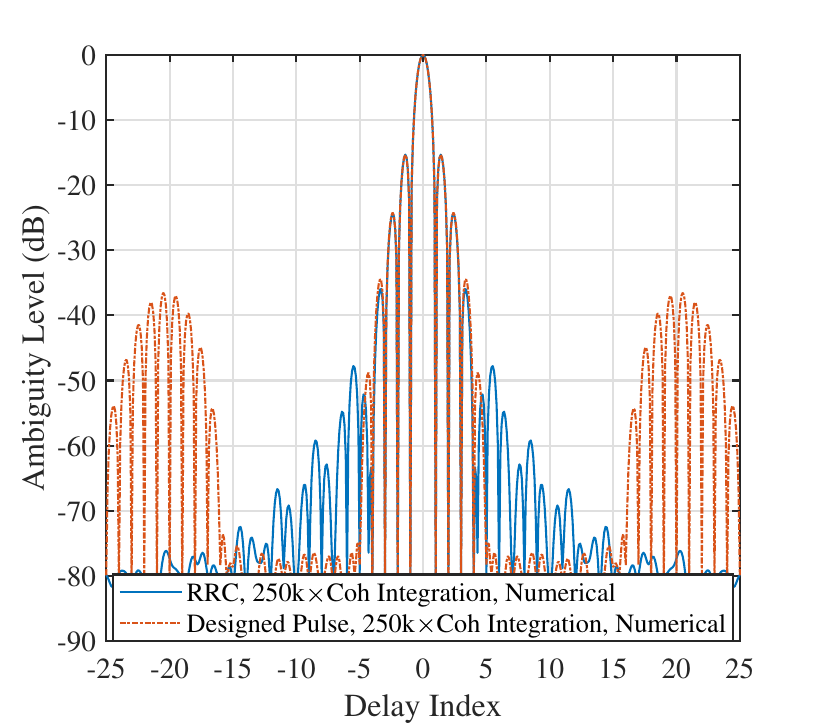}
\label{Iceberg_RRC_direct_comparison}}
\subfloat[]{\includegraphics[width=0.5\columnwidth]{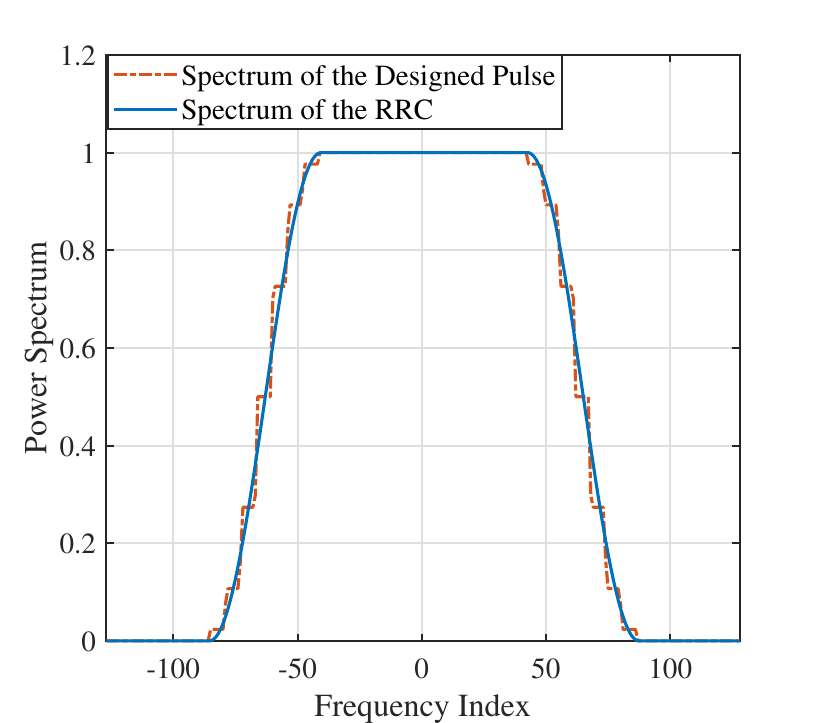}
\label{Iceberg_RRC_spectra_direct_comparison}}
\caption{The average squared ACFs with 250,000 coherent integrations under OFDM signaling with 16-QAM constellation, $N = 128$, $\alpha = 0.35$, $L = 10$. (a) Iceberg Shaping approach under PSL objective function, with the delay region of interest being $k \in \left[5,15\right]$. (b) RRC pulse shaping. (c) Direct comparison of coherently integrated ACFs under the proposed iceberg shaping and RRC pulse shaping using OFDM modulation and 16-QAM constellation. (d) The squared spectra of the designed pulse and the RRC.}
\label{Iceberg_RRC_Comparison}
\end{figure}

\begin{figure}[!t]
\centering
\subfloat[Range estimation performance.]{\includegraphics[width=0.92\columnwidth]{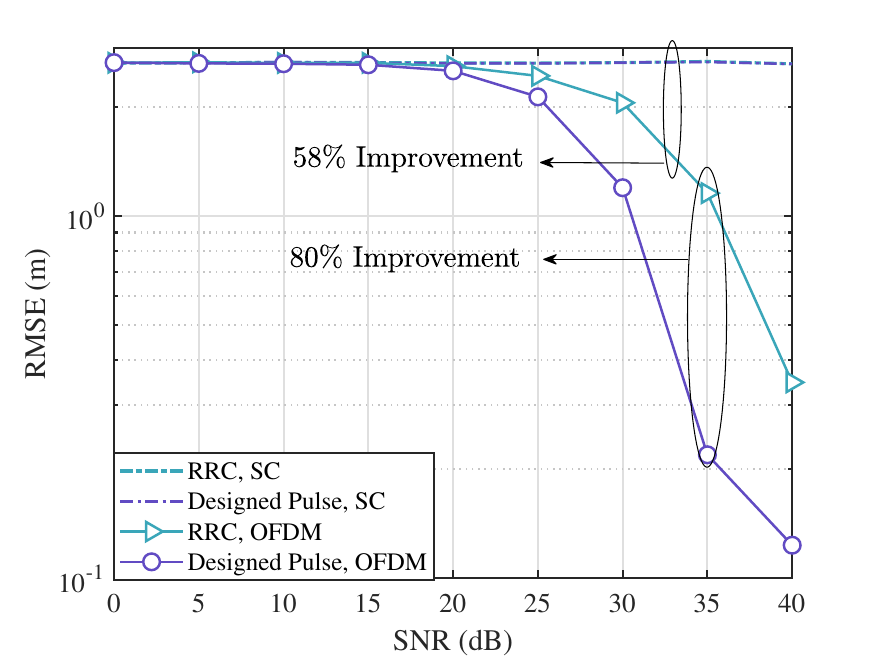}
\label{PSK_OFDM_vs_SC}}
\vspace{0.1in}
\subfloat[Range profiles.]{\includegraphics[width=1\columnwidth]{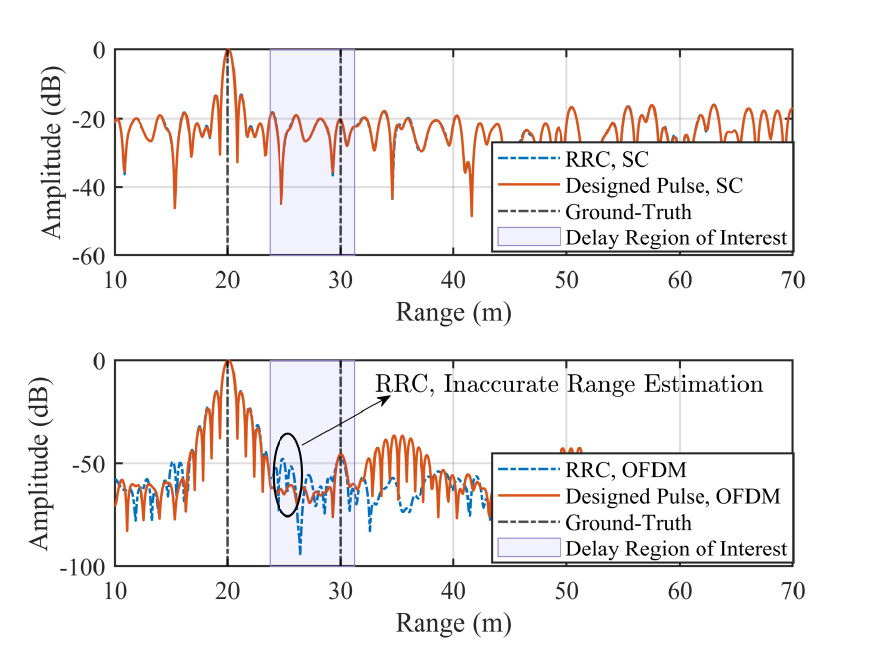}
\label{PSK_RangeProfile-eps-converted-to.pdf}}
\caption{The range estimation performance and profiles of two targets under SC and OFDM with 16-PSK constellation, where $N = 128$, $\alpha = 0.35$, $L = 10$, and range region of interest is $\left[23.74\text{m},31.24\text{m}\right]$.}
\label{PSK_QAM_RangeEstimation}
\end{figure}

\begin{figure}[!t]
\centering
\subfloat[Range estimation performance with/without coherent integration.]{\includegraphics[width=0.92\columnwidth]{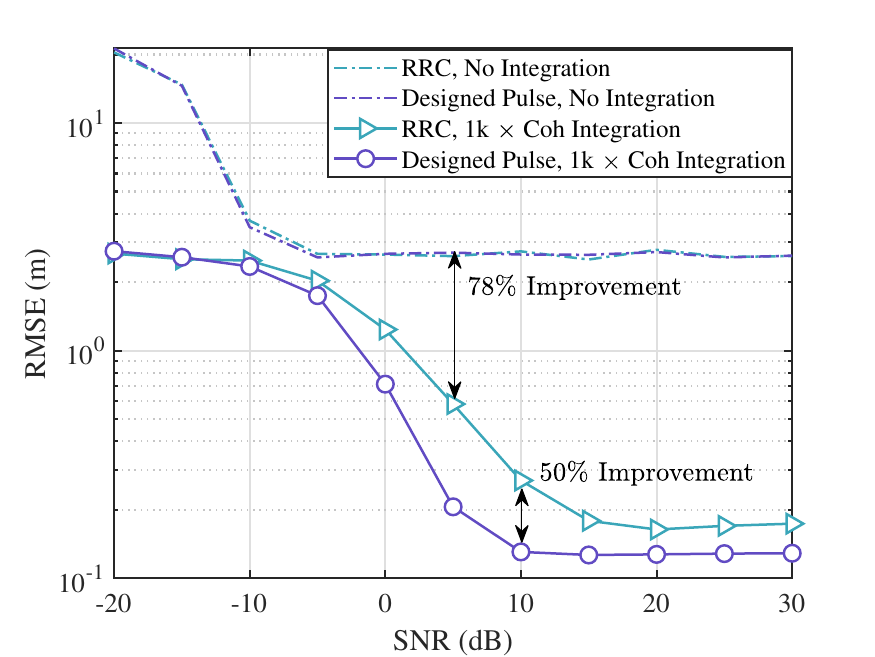}
\label{QAM_vs_NcohIntegration}}
\vspace{0.1in}
\subfloat[Range profiles with/without coherent integration.]{\includegraphics[width=1\columnwidth]{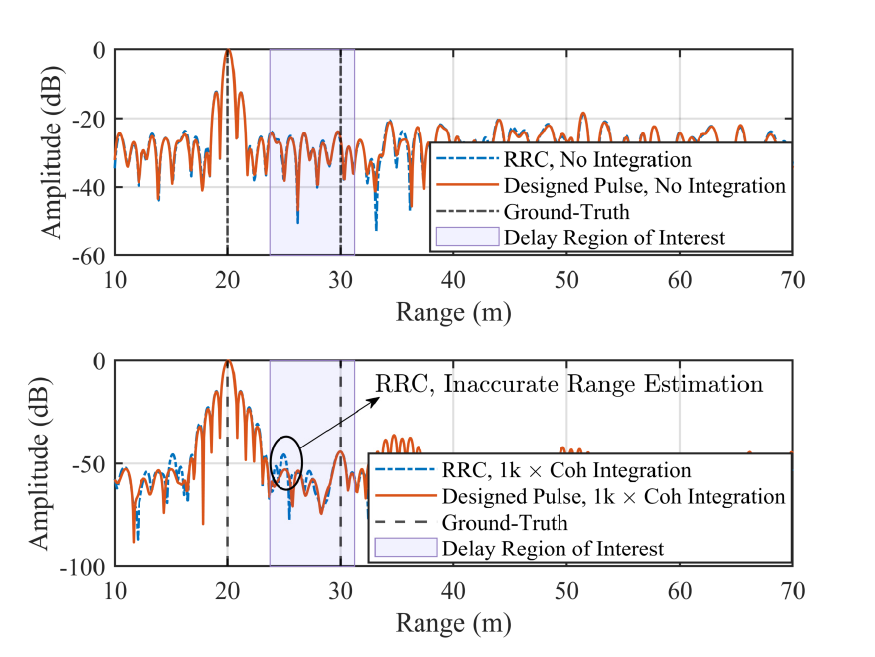}
\label{QAM_RangeProfile}}
\caption{The range estimation performance and profiles of two targets under OFDM with 16-QAM constellation, where $N = 128$, $\alpha = 0.35$, $L = 10$, $M = 1000$, and range region of interest is $\left[23.74\text{m},31.24\text{m}\right]$.}
\label{PSK_QAM_RangeEstimation}
\end{figure}

We then investigate the ACF of random OFDM ISAC signals with different constellation formats in Fig. \ref{OFDM_constellations}, under $\alpha = 0.35$ RRC pulse shaping. As expected, PSK constellations produce an average squared ACF identical to that of the RRC pulse itself, effectively representing the ``iceberg'' of the RRC. This indicates that PSK can successfully ``drain the sea'' when modulated by the OFDM basis. As the order of the QAM constellations increases, the sidelobe level rises due to the corresponding increase in the kurtosis. Finally, Gaussian constellations, with a kurtosis of 2, result in the highest ``sea level'', despite achieving the highest communication rate under Gaussian channels among all constellation types. This observation highlights the fundamental deterministic-random tradeoff in ISAC systems.

\subsection{Pulse Shaping Design}
Next, we examine the pulse shaping design of random ISAC signals by leveraging the iceberg theory proposed in this paper. As discussed in Sec. \ref{iceberg_shaping}, the ultimate ranging performance of random ISAC signals solely relies on the ACF of the pulse, provided that a sufficient number of MF outputs are coherently integrated, motivating the iceberg shaping approach. Here we study an example to show how iceberg shaping improves the performance of the ACF. In particular, we are interested in designing a pulse to reduce the sidelobe level within a given delay index region $k\in\left[5,15\right]$ by imposing a PSL objective function shown in \eqref{PSL}, with a roll-off factor $\alpha = 0.35$. The RRC pulse with the same $\alpha$ serves as the baseline.

The squared ACFs of the designed pulse and its RRC counterpart are illustrated as black dashed lines in Fig. \ref{Iceberg_Coh} and Fig. \ref{RRC_Coh}, respectively. Notably, the designed pulse achieves significant suppression in the sidelobe region for $k\in\left[5,15\right]$, resulting in a uniform sidelobe level of $-80$ dB. To attain this low sidelobe level, the number of coherent integrations must reach to $M = 250,000$, leading to a 54 dB reduction in the ``sea level'' against to the no-integration case. For further clarity, we directly compare the sidelobe levels of $M = 250,000$ coherently integrated ACFs for both strategies in Fig. \ref{Iceberg_RRC_direct_comparison}. This comparison highlights an up to 30 dB reduction in sidelobe level within $k\in\left[5,15\right]$ for the proposed iceberg shaping method, relative to the RRC benchmark, albeit at the cost of increased sidelobe power in other regions.

Finally, we present the resulting squared spectra of the designed pulse and the RRC in Fig. \ref{Iceberg_RRC_spectra_direct_comparison}, both of which strictly adhere to the folded spectrum criterion. Interestingly, the designed pulse displays a stepped shape in its frequency spectrum, despite having the same roll-off factor as the RRC. This indicates that the degrees of freedom for designing the Nyquist pulse primarily reside in the roll-off portion.


\subsection{Ranging Performance Analysis}
In this subsection, we present concrete ranging results for both SC and OFDM modulations under both RRC and designed pulses, highlighting the effectiveness of the proposed iceberg shaping and coherent integration techniques. Here we consider the transmission of $N = 128$ symbols over a bandwidth of $200$ MHz, translating to a subcarrier spacing of $1.5625$ MHz for OFDM signaling. We examine the detection performance of two targets located at $20$m (strong target) and $30$m (weak target), respectively. Accordingly, the iceberg shaping technique conceives a Nyquist pulse by minimizing the ISL within the delay region $\left[23.74\text{m},31.24\text{m}\right]$. The strong target has an amplitude that is $43\sim46$ dB higher than that of the weak targets. All results are attained from averaging over $1000$ Monte Carlo simulations.

As illustrated in Fig. \ref{PSK_OFDM_vs_SC}, we compare the ranging performance of the RRC and the proposed pulse shaping methods under SC and OFDM with the 16-PSK constellation. It is found that OFDM achieves significantly better performance compared to SC modulation, with more than $58\%$ improvement at high SNRs. This is because SC modulation exhibits a higher ``sea level'', which may obscure the reflection amplitude of weak targets in the range profile, as shown in Fig. \ref{PSK_RangeProfile.eps}. Furthermore, the proposed iceberg shaping technique shows an $80\%$ improvement at SNR = $35$ dB compared to that of the RRC. As depicted in Fig. \ref{PSK_RangeProfile.eps}, the RRC pulse shaping method may fail to identify the accurate range peak in the delay region of interest, as it is masked by the strong target's sidelobes. In contrast, the proposed pulse shaping scheme enables accurate range estimation thanks to its ability to minimize the ISL in the delay region of interest.

To validate the superior performance of the proposed iceberg shaping technique under coherent integration, we present the range estimation results for OFDM with a 16-QAM constellation in Fig. \ref{QAM_vs_NcohIntegration}. It is observed that OFDM signals without integration may fail to estimate the range of weak targets, regardless of the chosen pulse shaping method. This is because the reflection echo amplitude of the weak target is entirely masked by the sidelobes of the strong target, as shown in Fig. \ref{QAM_RangeProfile}. After $1000$ times coherent integration, both the RRC pulse and iceberg shaped pulse are able to identify the peak of the weak target. Notably, the proposed pulse shaping method achieves a $50\%$ performance improvement compared to that of the RRC. This demonstrates that the proposed iceberg shaping method, combined with coherent integration, supports flexible and reliable ranging in multi-target scenarios.

\section{Conclusion}\label{sec_6}
This study has provided a comprehensive analysis of the ranging performance of communication-centric ISAC signals, with a particular focus on modulation and pulse shaping design to improve target detection performance. By exploring how random data payload signals can be leveraged for both communication and sensing, we sought to reshape the statistical properties of auto-correlation functions (ACFs) to enhance sensing capabilities. We derived a closed-form expression for the expectation of the squared ACF of random ISAC signals, considering arbitrary modulation bases and constellation mappings within the Nyquist pulse shaping framework. Our analysis introduced a metaphorical ``iceberg hidden in the sea'' structure, where the ``iceberg'' represents the squared mean of the ACF, corresponding to the squared ACF of the adopted pulse shaping filter, while the ``sea level'' , arises from the randomness of data payloads, characterizes the variance of the ACF. Our results demonstrated that, for QAM/PSK constellations with Nyquist pulse shaping, OFDM achieves the lowest ranging sidelobe levels across all lags. Inspired by these findings, we proposed a novel Nyquist pulse shaping design to further enhance the sensing performance of random ISAC signals. Numerical evaluations validated our theoretical findings, confirming that the proposed pulse shaping significantly reduces ranging sidelobes when compared to conventional root-raised cosine (RRC) pulse shaping. These insights provide a promising pathway for optimizing ISAC systems in future 6G networks, improving the sensing functionality without sacrificing the communication performance.

\appendices
\section{Proof of Theorem 1}\label{Theorem_1_proof}
Let us first express the squared ACF as
    \begin{align}
        |{R}_k|^2 \nonumber&= \sum\limits_{n = 1}^{N}\tilde{g}_{n,k}|\mathbf{v}_n^H\mathbf{s}|^2 e^{\frac{j2\pi k(n-1)}{LN}}\sum\limits_{m = 1}^{N}\tilde{g}_{m,k}^\ast|\mathbf{v}_m^H\mathbf{s}|^2 e^{\frac{-j2\pi k(m-1)}{LN}}\\
        &=\sum\limits_{n = 1}^{N}\sum\limits_{m = 1}^{N}\tilde{g}_{n,k}\tilde{g}_{m,k}^\ast|\mathbf{v}_n^H\mathbf{s}|^2|\mathbf{v}_m^H\mathbf{s}|^2e^{\frac{j2\pi k(n-m)}{LN}}.
    \end{align}
Moreover, note that
    \begin{align}
        |\mathbf{v}_n^H\mathbf{s}|^2 \nonumber&= \mathbf{v}_n^H\mathbf{s}\mathbf{s}^H\mathbf{v}_n= (\mathbf{v}_n^T \otimes \mathbf{v}_n^H)\operatorname{vec}(\mathbf{s}\mathbf{s}^H)\\
        &=(\mathbf{v}_n^T \otimes \mathbf{v}_n^H)\tilde{\mathbf{s}} = \tilde{\mathbf{s}}^H(\mathbf{v}_n^\ast \otimes \mathbf{v}_n),
    \end{align}
yielding
    \begin{align}\label{average_P-ACF_step_1}
        &\nonumber \mathbb{E}(|R_k|^2) = \\
        &\sum\limits_{n = 1}^{N}\sum\limits_{m = 1}^{N}\tilde{g}_{n,k}\tilde{g}_{m,k}^\ast(\mathbf{v}_n^T \otimes \mathbf{v}_n^H)\mathbf{S}(\mathbf{v}_m^\ast \otimes \mathbf{v}_m)e^{\frac{j2\pi k(n-m)}{LN}},
    \end{align}
where $\mathbf{S} = \mathbb{E}(\tilde{\mathbf{s}}\tilde{\mathbf{s}}^H)$. To proceed, we exploit the following lemma to simplify the computation.
\begin{lemma}
    Let $\tilde{\mathbf{s}} = \operatorname{vec}(\mathbf{s}\mathbf{s}^H)$. For constellations that meet the Assumptions 1 and 2, we have
    \begin{align}
        \mathbf{S} \nonumber&= \mathbb{E}(\tilde{\mathbf{s}}\tilde{\mathbf{s}}^H) \\
        &= \left[ {\begin{array}{*{20}{c}}
  {{\mu _4}}&{{\mathbf{0}}_N^T}&1&{{\mathbf{0}}_N^T}&1& \cdots &1 \\ 
  {{{\mathbf{0}}_N}}&{{{\mathbf{I}}_N}}&{{{\mathbf{0}}_N}}&{{{\mathbf{0}}_N}}&{{{\mathbf{0}}_N}}& \cdots &{{{\mathbf{0}}_N}} \\ 
  1&0&{{\mu _4}}& \ldots &1& \ldots &1 \\ 
  {{{\mathbf{0}}_N}}&{{{\mathbf{0}}_N}}&{{{\mathbf{0}}_N}}&{{{\mathbf{I}}_N}}& \vdots & \cdots & \vdots  \\ 
  1&0&1&0&{{\mu _4}}& \cdots &1 \\ 
   \vdots & \vdots & \vdots & \vdots & \vdots & \ddots &{{{\mathbf{0}}_N}} \\ 
  1&0&1&0&1& \cdots &{{\mu _4}} 
\end{array}} \right] \in \mathbb{R}^{N^2\times N^2},
    \end{align}
    where $\mu_4$ is the kurtosis of the constellation, and $\mathbf{0}_{N}$ represents the all-zero vector with length N.
\end{lemma}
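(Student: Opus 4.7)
The plan is to compute $\mathbf{S} = \mathbb{E}(\tilde{\mathbf{s}}\tilde{\mathbf{s}}^H)$ entry by entry, identify which entries are nonzero under Assumptions 1 and 2, and then match the resulting sparsity pattern to the block structure claimed in the lemma.

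First, I would fix the column-major vectorization convention so that $\tilde{s}_{(j-1)N+i} = s_i s_j^{*}$ for $i,j \in \{1,\ldots,N\}$. Then the $(p,q)$-entry of $\mathbf{S}$ with $p=(j-1)N+i$ and $q=(l-1)N+k$ reduces to a single fourth-order moment of i.i.d.\ constellation symbols,
\[
\mathbf{S}_{pq} \;=\; \mathbb{E}\bigl(s_i\, s_j^{*}\, s_k^{*}\, s_l\bigr),
\]
so the problem becomes an exhaustive case analysis over index patterns $(i,j,k,l)$.

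Next, I would invoke independence together with Assumptions 1 and 2 (in particular $\mathbb{E}(s)=0$ and $\mathbb{E}(s^2)=0$) to argue that any partition of $\{i,j,k,l\}$ leaving a stray $\mathbb{E}(s_n)$ or $\mathbb{E}(s_n^2)$ factor forces the expectation to vanish. A direct enumeration then shows that only three patterns survive. The pattern $i=j=k=l$ yields $\mathbb{E}(|s|^4)=\mu_4$ and populates the $N$ ``diagonal-of-diagonal'' positions $p=q=(i-1)N+i$. The pattern $i=j$, $k=l$, $i\neq k$ yields $\mathbb{E}(|s_i|^2)\,\mathbb{E}(|s_k|^2)=1$ at the off-diagonal positions $\bigl((i-1)N+i,(k-1)N+k\bigr)$. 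The pattern $i=k$, $j=l$, $i\neq j$ yields $\mathbb{E}(|s_i|^2)\,\mathbb{E}(|s_j|^2)=1$ at the diagonal positions $p=q=(j-1)N+i$ with $i\neq j$. Critically, the pairing $i=l$, $j=k$ with $i\neq j$ would produce $\mathbb{E}(s_i^2)\,\mathbb{E}\bigl((s_j^{*})^2\bigr)=0$ and is therefore ruled out by rotational symmetry; all three-equal-one-distinct and four-distinct patterns vanish similarly through an uncompensated $\mathbb{E}(s)$ factor.

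Finally, I would assemble the three surviving cases into matrix form and verify that the resulting support matches the block pattern claimed in the lemma: the $\mu_4$ entries occupy exactly the $N$ main-diagonal positions indexed by $(i-1)N+i$; the case-(ii) entries contribute a rank-one term $\mathbf{e}\mathbf{e}^{T}$ with $e_{(i-1)N+i}=1$ (whose on-diagonal contribution is already absorbed into the $\mu_4$ entries); and the case-(iii) entries fill the remaining $N(N-1)$ diagonal slots, which organize into the interleaved $\mathbf{I}_N$ blocks visible in the displayed form. The main obstacle is purely notational rather than probabilistic: the column-major linearization spreads the three surviving patterns across a somewhat irregular mosaic, and confirming by inspection that the stated block matrix reproduces all three patterns (and nothing more) requires patient bookkeeping of the offsets between successive $(i-1)N+i$ positions.
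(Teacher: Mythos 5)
Your proposal is correct: the entrywise reduction $\mathbf{S}_{pq}=\mathbb{E}(s_i s_j^{*}s_k^{*}s_l)$, the elimination of all index patterns leaving a stray $\mathbb{E}(s)$ or $\mathbb{E}(s^2)$ factor (in particular the pairing $i=l$, $j=k$ killed by rotational symmetry), and the three surviving cases $i=j=k=l\mapsto\mu_4$, $i=j\neq k=l\mapsto 1$, $i=k\neq j=l\mapsto 1$ exactly reproduce the claimed support. Note that this paper does not prove the lemma itself but defers to the reference \cite{liu2024ofdm}, so there is no in-paper argument to diverge from; your fourth-moment case analysis is the standard route and is consistent with the paper's subsequent decomposition $\mathbf{S}=\mathbf{I}_{N^2}+\mathbf{S}_1+\mathbf{S}_2$. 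The only loose phrase is that the rank-one term $\mathbf{e}\mathbf{e}^{T}$ from case (ii) must have its diagonal removed (or, equivalently, the diagonal-of-diagonal entries split as $1+(\mu_4-2)+1$ as the paper does); as written this is harmless bookkeeping, not a gap.
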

\begin{proof}
See \cite{liu2024ofdm}.
\end{proof}

Let us further decompose $\mathbf{S}$ as
    \begin{equation}\label{S_decompose}
        \mathbf{S} = \mathbf{I}_{N^2} + \mathbf{S}_1 + \mathbf{S}_2,
    \end{equation}
where 
    \begin{align}
        {{\mathbf{S}}_1} &= \operatorname{Diag} \left( {{{\left[ {{\mu _4}-2,{\mathbf{0}}_N^T,{\mu _4}-2,{\mathbf{0}}_N^T, \ldots {\mu _4}-2} \right]}^T}} \right),\\
        {{\mathbf{S}}_2} &= \left[ {{\mathbf{c}},{{\mathbf{0}}_{{N^2} \times N}},{\mathbf{c}}, \ldots ,{\mathbf{c}},{{\mathbf{0}}_{{N^2} \times N}},{\mathbf{c}}} \right],
    \end{align}
with ${{\mathbf{0}}_{{N^2} \times N}}$ being the all-zero matrix of size ${{N^2} \times N}$, and 
    \begin{equation}
        {\mathbf{c}} = {\left[ {1,{\mathbf{0}}_N^T,1, \ldots ,1,{\mathbf{0}}_N^T,1} \right]^T}.
    \end{equation}
By leveraging $\mathbf{v}_n^H\mathbf{v}_m = \delta_{n,m}$, we have
    \begin{align}
        &(\mathbf{v}_n^T \otimes \mathbf{v}_n^H)\mathbf{I}_{N^2}(\mathbf{v}_m^\ast \otimes \mathbf{v}_m) = \mathbf{v}_n^T\mathbf{v}_m^\ast\mathbf{v}_n^H\mathbf{v}_m = \delta_{n,m}, \label{1st_term}\\\nonumber
        &(\mathbf{v}_n^T \otimes \mathbf{v}_n^H)\mathbf{S}_1(\mathbf{v}_m^\ast \otimes \mathbf{v}_m)=  (\mu_4 -2)\sum\limits_{p = 1}^{N}|v_{p,n}|^2|v_{p,m}|^2 \\
        &=(\mu_4 -2)\left\| {{{\mathbf{v}}_n} \odot {{\mathbf{v}}_m}} \right\|^2, \label{2nd_term}\\\nonumber
        &(\mathbf{v}_n^T \otimes \mathbf{v}_n^H)\mathbf{S}_2(\mathbf{v}_m^\ast \otimes \mathbf{v}_m)=\sum\limits_{p = 1}^{N}|v_{p,n}|^2\sum\limits_{q = 1}^{N}|v_{q,m}|^2 \\
        & = \left\| {{\mathbf{v}}_n} \right\|^2\left\| {{\mathbf{v}}_m} \right\|^2 = 1,\label{3rd_term}
    \end{align}
Plugging \eqref{S_decompose}, \eqref{1st_term}-\eqref{3rd_term} into \eqref{average_P-ACF_step_1} yields
    \begin{align}\label{ERk2}
        &\nonumber\mathbb{E}(|R_k|^2) =  {\left| {\sum\limits_{n = 1}^N {{{\tilde g}_{n,k}}{e^{\frac{{j2\pi \left( {n - 1} \right)k}}{{LN}}}}} } \right|^2} + \sum\limits_{n = 1}^N {{{\left| {{{\tilde g}_{n,k}}} \right|}^2}}  \\ & + \left( {{\mu _4} - 2} \right)\sum\limits_{n = 1}^N {\sum\limits_{m = 1}^N {{{\tilde g}_{n,k}}{\tilde g}_{m,k}^\ast} } \left\| {{{\mathbf{v}}_n} \odot {{\mathbf{v}}_m}} \right\|^2{e^{\frac{{j2\pi \left( {n - m} \right)k}}{{LN}}}}\\
        &  = N{\left| {\tilde{\mathbf{f}}_{k + 1}^H\tilde {\mathbf{g}}_k} \right|^2} + \left\| {\tilde {\mathbf{g}}_k} \right\|^2 + ({\mu _4} - 2)N\left\| {\tilde {\mathbf{V}}\left( {\tilde {\mathbf{g}}_k\odot {\tilde{\mathbf{f}}}_{k + 1}^\ast} \right)} \right\|^2.
    \end{align}
Moreover, note that the mean of $R_k$ can be expressed as
    \begin{align}\label{average_Rk}
        \mathbb{E}\left( {R_k}\right)&\nonumber = \sum\limits_{n = 1}^{N}\tilde{g}_{n,k}\mathbb{E}\left(|\mathbf{v}_n^H\mathbf{s}|^2\right) e^{\frac{j2\pi k(n-1)}{LN}}\\
        & = \sum\limits_{n = 1}^{N}\tilde{g}_{n,k} e^{\frac{j2\pi k(n-1)}{LN}} = \sqrt{N}\tilde{\mathbf{f}}_{k + 1}^H\tilde {\mathbf{g}},
    \end{align}
leading to the variance of $R_k$ in the form of
    \begin{align}\label{var_Rk}
        \operatorname{var}(R_k) &\nonumber = \mathbb{E}(|R_k|^2) -  \left|\mathbb{E}\left( {R_k}\right)\right|^2\\
        & = \left\| {\tilde {\mathbf{g}}_k} \right\|^2 + ({\mu _4} - 2)N\left\| {\tilde {\mathbf{V}}\left( {\tilde {\mathbf{g}}_k\odot {\tilde{\mathbf{f}}}_{k + 1}^\ast} \right)} \right\|^2.
    \end{align}
This concludes the proof.

\section{Proof of Theorem 2}\label{Theorem_2_Proof}
When $\mu_4 < 2$, minimizing the sea level in the sidelobe region is equivalent to maximize $\left\| {\tilde {\mathbf{V}}\left( {\tilde {\mathbf{g}}_k\odot {\mathbf{f}}_{k + 1}^\ast} \right)} \right\|^2$ over the nonconvex set of unistochastic matrices, which may be relaxed to the following problem:
\begin{equation}\label{cvx_relax}
    \mathop {\max }\limits_{\tilde{\mathbf{V}} \in \mathbb{V}} \;\;\left\| {\tilde {\mathbf{V}}\left( {\tilde {\mathbf{g}}_k\odot \tilde{\mathbf{f}}_{k + 1}^\ast} \right)} \right\|^2,
\end{equation}
where $\mathbb{V}$ represents the set of bistochastic matrices, which is the convex hull of the set of unistochastic matrices. We will show in the sequel that such a convex relaxation is tight, i.e., the optimal solution to \eqref{cvx_relax} is also the optimal solution to maximizing the objective function over the set of unistochastic matrices.

Let us decompose ${\tilde {\mathbf{g}}_k\odot \tilde{\mathbf{f}}_{k + 1}^\ast}$ into
\begin{equation}
    {\tilde {\mathbf{g}}_k\odot \tilde{\mathbf{f}}_{k + 1}^\ast} = \mathbf{b}_R + j\mathbf{b}_I,
\end{equation}
where $\mathbf{b}_R$ and $\mathbf{b}_I$ are the real and imaginary parts of ${\tilde {\mathbf{g}}_k\odot \tilde{\mathbf{f}}_{k + 1}^\ast}$. We then have
\begin{equation}\label{real_decomposition}
   \left\| {\tilde {\mathbf{V}}\left( {\tilde {\mathbf{g}}_k\odot \tilde{\mathbf{f}}_{k + 1}^\ast} \right)} \right\|^2 = \left\| {\tilde {\mathbf{V}}\mathbf{b}_R} \right\|^2 + \left\| {\tilde {\mathbf{V}}\mathbf{b}_I} \right\|^2.
\end{equation}
We prove Theorem 2 by the concept of majorization. Given a pair of real vectors $\mathbf{a},\mathbf{b}\in\mathbb{R}^{N \times 1}$, let $a_n^{\downarrow}, b_n^{\downarrow}$ be their $n$th largest entries, respectively. If
\begin{align}
    &\sum\limits_{n = 1}^N a_n = \sum\limits_{n = 1}^N b_n, \;\sum\limits_{n = 1}^k a_n^{\downarrow} \ge  \sum\limits_{n = 1}^k b_n^{\downarrow}, \quad k = 1,2,\ldots,N,
\end{align}
we say that $\mathbf{a}$ majorizes $\mathbf{b}$, denoted as $\mathbf{a}\succ\mathbf{b}$. This may be equivalently defined as $\mathbf{b} = \tilde{\mathbf{V}}\mathbf{a}$, where $\tilde{\mathbf{V}}$ is any bistochastic matrix. Consequently, it follows that ${\tilde{\mathbf{V}}\mathbf{b}_R} \prec \mathbf{b}_R$ and ${\tilde{\mathbf{V}}\mathbf{b}_I} \prec \mathbf{b}_I$. Recognizing that the $\ell_2$ norm is a Schur-convex function, which is monotonic with respect to majorization, we have
\begin{equation}
    \left\|{\tilde {\mathbf{V}}\mathbf{b}_R}\right\|^2 \le \left\|\mathbf{b}_{R}\right\|^2, \quad \left\|{\tilde {\mathbf{V}}\mathbf{b}_I}\right\|^2 \le \left\|\mathbf{b}_{I}\right\|^2,
\end{equation}
and hence
\begin{equation}
    \left\| {\tilde {\mathbf{V}}\left( {\tilde {\mathbf{g}}_k\odot \tilde{\mathbf{f}}_{k + 1}^\ast} \right)} \right\|^2 \le  \left\| {\tilde {\mathbf{g}}_k\odot \tilde{\mathbf{f}}_{k + 1}^\ast} \right\|^2,
\end{equation}
where all equalities hold \textit{if and only if} $\tilde{\mathbf{V}} = \mathbf{V}\odot\mathbf{V}^\ast$ is a permutation matrix, which is obviously unistochastic. This suggests that the only possible form of the optimal $\mathbf{V}$ is a complex permutation matrix, i.e.,
\begin{equation}
    \mathbf{V}_{\text{sub}} = \bm{\Pi}\operatorname{Diag}({\bm{{\theta}}}),
\end{equation}
where $\bm{\Pi}$ is any permutation matrix, and ${\bm{{\theta}}}\in\mathbb{C}^{N\times 1}$ is an arbitrary unit-modulus vector representing the phases. Accordingly, the optimal signaling basis has the form of 
\begin{equation}\label{ofdm_opt}
    \mathbf{U}_{\text{sub}} = \mathbf{F}_N^H\operatorname{Diag}({\bm{{\theta}}}^\ast)\bm{\Pi}^T.
\end{equation}
 If $\bm{\Pi} = \mathbf{I}_N, {\bm{{\theta}}} = \mathbf{1}_N$, then $\mathbf{U}_{\text{sub}}$ represents the standard OFDM modulation. Otherwise, \eqref{ofdm_opt} simply results in an OFDM modulation with different initial phases over permuted subcarriers. This also confirms that OFDM is the only modulation basis achieving the lowest ranging sidelobe, due to the uniqueness of $\mathbf{V}_{\text{sub}}$.

 \section{Proof of Theorem 3} \label{Theorem_3_proof}
 When $\mu_4 > 2$, minimizing the sea level in the sidelobe region is to minimize $\left\| {\tilde {\mathbf{V}}\left( {\tilde {\mathbf{g}}_k\odot \tilde{\mathbf{f}}_{k + 1}^\ast} \right)} \right\|^2$ over the set of unistochastic matrices, which can be similarly relaxed into
\begin{equation}\label{cvx_relax_min}
    \mathop {\min }\limits_{\tilde{\mathbf{V}} \in \mathbb{V}} \;\;\left\| {\tilde {\mathbf{V}}\left( {\tilde {\mathbf{g}}_k\odot \tilde{\mathbf{f}}_{k + 1}^\ast} \right)} \right\|^2.
\end{equation}
Solving problem \eqref{cvx_relax_min} is equivalent to minimizing \eqref{real_decomposition}. Let $\mathbf{D} = \frac{1}{N}\mathbf{1}_{N}\mathbf{1}_{N}^T$, which is a uniform bistochastic matrix. For any bistochastic matrix ${\tilde{\mathbf{V}}}$, it holds $\mathbf{D}{\tilde{\mathbf{V}}} = \mathbf{D}$. This indicates that
\begin{equation}
    \mathbf{D}{\tilde{\mathbf{V}}}\mathbf{b}_{R} = \mathbf{D}\mathbf{b}_{R}, \quad \mathbf{D}{\tilde{\mathbf{V}}}\mathbf{b}_{I} = \mathbf{D}\mathbf{b}_{I}
\end{equation}
i.e., $\mathbf{D}\mathbf{b}_{R}\prec {\tilde{\mathbf{V}}}\mathbf{b}_{R}$, $\mathbf{D}\mathbf{b}_{I}\prec {\tilde{\mathbf{V}}}\mathbf{b}_{I}$. Again, due to the Schur-convexity of the $\ell_2$ norm, we have
\begin{equation}
    \left\|{\tilde{\mathbf{V}}}\mathbf{b}_{R}\right\|^2 \ge \left\|\mathbf{D}\mathbf{b}_{R}\right\|^2, \quad \left\|{\tilde{\mathbf{V}}}\mathbf{b}_{I}\right\|^2 \ge \left\|\mathbf{D}\mathbf{b}_{I}\right\|^2,
\end{equation}
and hence
\begin{equation}
    \left\| {\tilde {\mathbf{V}}\left( {\tilde {\mathbf{g}}_k\odot \tilde{\mathbf{f}}_{k + 1}^\ast} \right)} \right\|^2 \ge  \left\| {{\mathbf{D}}\left( {\tilde {\mathbf{g}}_k\odot \tilde{\mathbf{f}}_{k + 1}^\ast} \right)} \right\|^2,
\end{equation}
which suggests that the optimal $\mathbf{V}$ should be both unitary and have constant modulus. A proper choice would be the IDFT matrix, namely, 
\begin{equation}
    \mathbf{V}_{\text{super}} = \mathbf{F}_N^H,
\end{equation}
resulting in the optimal signaling basis for super-Gaussian constellations:
\begin{equation}
    \mathbf{U}_{\text{super}} = \mathbf{I}_N,
\end{equation}
which leads to an SC modulation.
	\bibliographystyle{IEEEtran}
	\bibliography{IEEEabrv,references_SPM,references,database}

\end{document}